\UseRawInputEncoding
\documentclass[12pt,a4paper,english]{article}
\usepackage[utf8]{inputenc}
\usepackage[english]{babel} 
\usepackage[T1]{fontenc}        
\usepackage{textcomp}

\usepackage{amsmath,amsthm,amssymb,amsfonts,enumitem}
\usepackage{bm}
\usepackage{blkarray}

\usepackage{setspace}
\setstretch{1.2}

\usepackage{hyperref}

\usepackage{caption}
\usepackage{subcaption}

\usepackage{natbib}
\bibliographystyle{agsm}

\usepackage{booktabs, siunitx}


\usepackage{pdfpages}
\usepackage{graphicx}


\newtheorem{definition}{Definition}
\newtheorem{theorem}{Theorem}[section]

\newtheorem{lemma}[theorem]{Lemma}
\newcounter{example}[section]

\usepackage[section]{placeins}

\newcommand{\N}{\mathcal{N}}
\newcommand{\A}{\mathcal{A}}
\newcommand{\R}{\mathbb{R}}
\newcommand{\E}{\mathbb{E}}
\newcommand{\Z}{\mathbb{Z}}
\newcommand{\Var}{\mathbb{V}}

\newcommand{\AV}{\text{AV }}

\newcommand{\Y}{\mathcal{Y}}
\newcommand{\BY}{\bm{Y}}

\newcommand{\B}{\mathcal{B}_0}
\newcommand{\U}{\mathcal{U}}
\newcommand{\V}{\mathcal{V}}
\newcommand{\PP}{\mathcal{P}}

\newcommand{\Bbeta}{\bm{\beta}}
\newcommand{\Bb}{\bm{b}}
\newcommand{\By}{\bm{y}}
\newcommand{\Btheta}{\bm{\theta}}

\newcommand{\Cov}{\mathrm{Cov}}
\newcommand{\deffeq}{\mathrel{\overset{\makebox[0pt]{\mbox{\normalfont\tiny\sffamily def}}}{=}}}
\newcommand{\tth}{\footnotesize ^{\mbox{th}}\normalsize}
\newcommand{\ie}{{\it i.e., }}
\newcommand{\eg}{{\it e.g., }}

\DeclareMathOperator*{\argmax}{arg\,max}

\usepackage{authblk}


\title{Conditional Inference for Multivariate Generalised Linear Mixed Models}
\author[1]{Jeanett S. Pelck}
\author[1*]{Rodrigo Labouriau}
\affil[1]{Department of Mathematics,  Aarhus University, Denmark}
\affil[*]{\footnotesize Corresponding author: Rodrigo Labouriau, rodrigo.labouriau@math.au.dk \normalsize}

\date{July 2021}

\begin{document}

\maketitle

\begin{abstract}

\noindent
We propose a method for inference in generalised linear mixed 
models (GLMMs) and several extensions of these models. First, we 
extend the GLMM by allowing the distribution of the random 
components to be non-Gaussian, that is, assuming an absolutely 
continuous distribution with respect to the Lebesgue measure that is 
symmetric around zero, unimodal and with finite moments up to 
fourth-order. Second, we allow the conditional distribution to follow 
a dispersion model instead of exponential dispersion models. Finally, 
we extend these models to a multivariate framework where multiple 
responses are combined by imposing a multivariate absolute 
continuous distribution on the random components representing common clusters of observations in all the marginal models.

Maximum likelihood inference in these models involves evaluating an integral that often cannot be computed in closed form. We suggest an inference method that predicts values of  random components and does not involve the integration of conditional likelihood quantities.
The multivariate GLMMs that we studied can be constructed with marginal 
GLMMs of different statistical nature, and at the same time, represent complex 
dependence structure providing a rather flexible tool for applications. 
\end{abstract}

\newpage

\tableofcontents

\newpage

\section{Introduction}

Generalised linear mixed models (GLMMs) form a flexible class of statistical 
models, which combines the capability to incorporate non-Gaussian distributions and 
non-linear link functions, inherited from standard generalised linear models, with 
the power of representing complex dependence structures using random 
components in the same fashion as classic (Gaussian) mixed models. Therefore, 
GLMMs appear as a natural tool in many applications (see 
\citeauthor{Demidenko2013}, \citeyear{Demidenko2013}; \citeauthor{Mcculloch2001}, \citeyear{Mcculloch2001}; \citeauthor{Fahrmeir2001}, \citeyear{Fahrmeir2001} and 
\citeauthor{Agresti2002}, \citeyear{Agresti2002}).
%
%
%
%
%
However, the power of GLMMs comes with a price: the required inference tools 
are more demanding than standard statistical models. For instance, the 
likelihood-based inference requires a non-trivial integration of conditional 
likelihood quantities. Moreover, some of the simplifications of the integration 
used in the classic Gaussian mixed models (\eg the result of conditioning a 
Gaussian distribution on Gaussian random components yields a Gaussian marginal 
distribution) do not apply in general for GLMMs. For this reason, several 
inferential tools are discussed in the literature; see \cite{Breslow1993}, 
\cite{Mcculloch2001}; see also \cite{McCulloch1997} for a comprehensive study comparing several methods 
ranging from simple numeric (quadrature) integration of the conditional 
likelihood to several versions of the EM algorithm. 

In this paper we present an alternative method of inference for GLMMs, 
constructed using inference functions, which avoids integrating likelihood 
quantities while preserving some of the desirable properties of classic 
likelihood-based methods. Moreover, this new method applies to GLMMs 
with minimal requirements for the distribution of the random components, which 
are not necessarily assumed to be normally distributed, as in the standard setup of 
GLMMs. For instance, we will be able to consider models with 
heavy-tailed random components as the multivariate t-distribution. 

The methods we expose allow us to construct natural extensions to multivariate 
GLMMs. The main idea is to construct one GLMM describing each response. It is 
assumed that there is a natural cluster of observations (\eg individuals or 
experimental units). Each of those GLMMs contains random components 
representing those clusters, \ie taking the same value for all the observations 
belonging to the same cluster. The multivariate GLMM is then constructed by 
assuming that the distributions of the random components representing the 
clusters are the marginal distributions of a multivariate distribution (\eg a 
multivariate normal distribution or a multivariate t-distribution).
Note that the multivariate generalised linear mixed models (MGLMMs), that we 
obtain in this way, can have marginal models of different nature which might be defined with different distributions and different link functions. 
In this way, those multivariate models can simultaneously describe responses of varying nature in a way that is not possible do with classic multivariate Gaussian models. 
Furthermore, since we defined the random components of the marginal GLMMs using minimal distributional assumptions, we will also obtain a MGLMM constructed with a flexible class of multivariate random components. For instance, the multivariate random 
components can be multivariate normally distributed or regular elliptical 
contoured distributed.

The paper is structured as follows. In Section \ref{Sec:2}, we introduce an 
extension of GLMMs constructed using random components that are not normally 
distributed, and by extending the family of conditional distributions.  We use a 
simple case, containing random components representing a grouping of the 
observations (denoted clusters) due to the observational scheme used in the 
experiment,
to present the ideas behind the inference techniques we propose in Section 
\ref{SubSec:2.1}, and expose the basic asymptotic properties of those techniques 
in Section \ref{SubSec:2.2}. Section \ref{SubSec:2.3} extends the inference 
techniques to the case of models with complex clustering structures. In Section 
\ref{Sec:3}, we discuss the inference for multivariate versions of GLMMs. 
Section \ref{SubSec:3.1} presents two simulation studies. The appendices 
\ref{app:A0}, \ref{app:A1} 
and 
\ref{app:B} expose some technical details and involved calculations. 
Appendix \ref{app:C} presents a multivariate extension of the classical inference method based on a Laplace approximation for GLMMs.

\section{Extended One Dimensional Generalised Linear Mixed Models \label{Sec:2}}

This section will study a one-dimensional extension of standard GLMMs defined 
with random intercepts, and discuss an estimation technique based on conditional 
inference for those models. The GLMMs that we consider contain random 
components that are 
not necessarily Gaussian distributed. Moreover, they allow the conditional 
distributions to 
follow a general dispersion model, and therefore, they enlarge the class of 
standard 
GLMMs. We extend the models and inferential techniques described here to a 
multivariate context in Section \ref{Sec:3}.

\subsection{Generalised Linear Mixed Models with Simple Random 
Components \label{SubSec:2.0}}

Consider the situation where we observe the responses of $n$ individuals or experimental units. Those responses are viewed as realisations of $n$ random variables taking values in $\Y\subseteq\R$, which we denote by  $Y_1,\ldots,Y_n$. 
Here  $\Y$ is typically $\R$, $\R_+$, a compact real interval or $\Z_+$
(corresponding to models defined using for example the Normal, Gamma, von Mises or the Poisson distributions).
Suppose that each individual belongs to one, and only one, of $q$ groups of individuals, referred as \emph{clusters}.
We assume that there exist $q$ independent unobservable random variables taking 
values in $\R$, say $B_1,\ldots,B_q$, termed the \emph{random components}, 
that will be associated to the clusters as described below. Denote the random 
vector $(B_1,\ldots,B_q)$ by $\bm{B}$.
According to the model, the responses $Y_1,\ldots,Y_n$ are conditionally independent given $\bm{B}$. Furthermore, for $i=1,\ldots,n$ and each $\bm{b}\in \R^q$, $Y_i$ is conditionally distributed according to a dispersion model  (see \citeauthor{Jorgensen1997},   \citeyear{Jorgensen1997} and \citeauthor{Cordeiro2021}, \citeyear{Cordeiro2021}, or equation (\ref{eq:dispDen}))
 given $\bm{B}$, with conditional expectation given by
\begin{align}\label{eqn3.2.01}
g\left (\E\left [Y_i \vert \bm{B}=\bm{b}\right ]\right ) = \bm{x}_i^T\bm{\beta}+\bm{z}_i^T\bm{b}\, , \,\,\, \mbox{ for all } \bm{b}\in \R^q \, .
\end{align}
Here $g$ is a given \emph{link function}, $\bm{x}_i$ is a vector of $k$ explanatory variables associated to the $i\tth$ individual and $\bm{\beta}\in \Omega\subseteq\R^k$ is a vector of coefficients,  referred as the \emph{fixed effects}.
Furthermore, $\bm{z}_{i}$ is a $q$-dimensional allocation vector associating the $i\tth$ individual to one of the $q$ clusters. 
The $j\tth$ entry of the vector $\bm{z}_{i}$ takes the value $1$ if the $i\tth$ individual belongs to the  $j\tth$ cluster and $0$ otherwise. Other forms of allocation vectors are possible, but we restrict to the particular form above to simplify the exposition of ideas.

It is convenient to introduce the following nomenclature and notation for the right side of (\ref{eqn3.2.01}).
The \emph{linear predictor} and the  \emph{conditional mean response} for the $i\tth$ individual ($i=1,\ldots , n$) are defined by $\eta_i = \eta_i \left (\bm{\beta}, \bm{b}\right ) \deffeq  \bm{x}_i^T\bm{\beta}+\bm{z}_i^T\bm{b}$  and $\mu_i = \mu_i \left (\bm{\beta}, \bm{b}\right ) \deffeq g^{-1}(\eta_i )$, respectively. The parameter space of the conditional means is denoted by $\U\subseteq\R$ and we write $\mu_i\in\U$. Additionally, denote the random vector of observations $\left ( Y_1, \ldots , Y_n \right )$ by $\BY$, and the vector of observed responses $\left ( y_1, \ldots , y_n \right )$ by $\By$.

The specification of the extended GLMM that we consider is completed by 
defining the distribution of the random components as follows.
 We assume that $B_1,\ldots, B_q$ are independent and identically distributed according to a distribution that is absolutely continuous with respect to the Lebesgue measure on $\R$, symmetric around zero, unimodal, and possesing finite moments up to the fourth-order. 
 Note that the random components have expectation zero due to the symmetry.
 Denote the density of this distribution by $\varphi(\cdot,\sigma^2)$, where $\sigma^2\in \V \deffeq \R_+$ is a parameter describing the dispersion of the distribution. Here a typical choice would be a normal or a regular absolute continuous one-dimensional elliptically contoured family of distributions and in this case  $\sigma^2$ would be the variance parameter.
\footnote{Here a one-dimensional elliptically contoured family of distributions is a location and scale family of distributions, with location and scale parameters $\mu$ and $\sigma$,  for which the characteristic functions $\phi$, satisfy the functional equation 
$\phi (t)=e^{i \mu t}\psi (-\tfrac{1}{2}t \sigma^2 t)$ for all $t\in\R$, for a given 
function  $\psi$.}

 Under the model defined above, the conditional distribution of the $i\tth$ 
 observation $Y_i$ given  $\bm{B}$ (for $i=1,\ldots,n$), has a density with 
 respect to a dominating measure $\nu$ (defined on the measurable space $(\Y, 
 \A)$),
 taking the form of a dispersion model (see \citeauthor{Jorgensen1997},   \citeyear{Jorgensen1997} and \citeauthor{Cordeiro2021}, \citeyear{Cordeiro2021}). 
 Therefore, the refferred density takes the form
\begin{align}\label{eq:dispDen}
f(y_i \vert \bm{B}= \bm{b};\bm{\beta},\lambda)  
& = 
a(y_i; \lambda) \exp\left [ -\tfrac{1}{2\lambda} \, 
                                           d \left \{ y_i;g^{-1} ( \bm{x}_i^T\bm{\beta}+\bm{z}_i^T\bm{b}) \right \}\right ]
\\ \nonumber
& = 
a(y_i; \lambda) \exp\left \{-\tfrac{1}{2\lambda} \, d \left (y_i;\mu_i \right )\right \}, \quad \forall \,\, y_i \in\Y, \,\, \forall \,\, \bm{b}\in\R^q \, ,
\end{align} 
where $\bm{\beta}\in \Omega$ and $\lambda\in \Lambda = \R_+$.
The function $d: \mathcal{Y}\times \U \rightarrow \R_+$ is the \emph{unit 
deviance} and, by definition, satisfies that $d(\mu ,\mu)=0$ and $d(y,\mu)>0$ 
for all $(y,\mu)\in \mathcal{Y}\times \U$ such that $ y\neq \mu$. The function 
$a:\mathcal{Y}\times \R_+ \rightarrow \R_+$ is a given normalising function. 
We assume that the unit deviance is regular, that is, $d$ is twice continously 
differentiable in $\mathcal{Y}\times  \U$ and 
$\partial^2d(\mu;\mu)/\partial\mu^2>0$ for all $\mu\in\U$. The function 
$V:\U\rightarrow\R_+$ given by $V(\mu)=2/\{\partial^2 
d(\mu,\mu)/\partial\mu^2\}$  for all $\mu$ in $\U$ is termed the \emph{variance 
function} \citep{Cordeiro2021}. 
The conditional variance of $Y_i$ given the random components is $V(\mu_i)/\lambda$. 
The following families of distributions are examples of dispersion models: Normal, Gamma, inverse Gaussian, von Mises, Poisson, and Binomial families.

We formally define the extended GLMM described above as the family
\begin{align*}
 \PP = 
  \left \{
    P_{\Bbeta, \lambda, \sigma^2} :
    \Bbeta \in \Omega , \quad
    \lambda \in \Lambda = \R_+ , \quad
    \sigma^2 \in \V = \R_+
  \right \}
\end{align*}
of probability measures defined on the product measurable space $(\mathcal{Y}^n,\mathcal{A}^n)$ (where $\mathcal{A}^n$ is the related product $\sigma$-algebra) corresponding to the probability measures defining the extended GLMM described above. 
Let $\bm{\nu}$ be the product measure induced by $\nu$.
The density of the distributions in $\PP$, with respect to $\bm{\nu}$, are given by
\begin{align}\label{Eqn.2.2.1}
  p \left (\By ;  \bm{\beta},\lambda, \sigma^2 \right ) 
  \deffeq
   \frac{dP_{\Bbeta, \lambda, \sigma^2}(\By)}{d\bm{\nu}} 
    =
  \int_{\R^q} 
  \prod_{i=1}^n f(y_i\vert \bm{B} = \bm{b};\bm{\beta},\lambda)  
  \prod_{j=1}^q \varphi(b_j;\sigma^2) \,\, d \bm{b}  \, ,
\end{align}
for all $\By\in\Y^n$, $ \Bbeta \in \Omega$, $\lambda \in \Lambda$ and $\sigma^2 \in \V$.

We will use the following set of regularity conditions on the generalised linear mixed model $\PP$:
\begin{enumerate}[label=(\roman*)]
\item The matrices $\bm{X}$ and $\bm{Z}$ have full rank (\ie rank $k$ and $q$, respectively)
\item The link function is strictly monotone, invertible and  twice continuously differentiable with bounded first order derivative
\item The unit deviance, $d(y,\mu)$, is twice continuous differentiable with respect to $\mu$
\item  The  functions
$\tfrac{\partial}{\partial \bm{\beta}}d\left\{ \,\cdot\, 
;g^{-1}(\bm{x}_i^T\bm{\beta}+{\bm{z}}_i^T\bm{b})\right\}$.
and
$\tfrac{\partial}{\partial \bm{b}}d\left\{ \,\cdot\, 
;g^{-1}(\bm{x}_i^T\bm{\beta}+{\bm{z}}_i^T\bm{b})\right\}$
are  dominated by integrable functions (not necessarily the same dominating 
functions) for each $\bm{\beta}\in \R^k$ and $\bm{b}\in \R^q$.
\end{enumerate}
These mild regularity conditions turn out to be minimal requirements for the inference theory that we construct.

Let $\bm{y} = (y_1, \ldots y_n)$ be a realisation of the random vector $\bm{Y} 
\deffeq (Y_1,\ldots,Y_n)$ of responses. 
Under the model $\PP$, the likelihood function for the parameters $\bm{\beta},\lambda$ and $\sigma^2$, based on $\bm{y}$, is 
\begin{align}\label{eq:margLik}
L(\bm{\beta},\lambda, \sigma^2; \bm{y}) 
= 
p \left (\By ;  \bm{\beta},\lambda, \sigma^2 \right ) 
\, . 
\end{align}
Usually, the integral in the right side of \eqref{Eqn.2.2.1} involved in the 
calculation of the likelihood function in  \eqref{eq:margLik}, cannot be evaluated 
in closed form. In Section \ref{SubSec:2.1}, we introduce an inference method that 
includes predictions of values of the random components, $B_1,\ldots, B_q$, and 
avoids the integration.
This inferential procedure will be justified using asymptotic arguments in Section \ref{SubSec:2.2}.


We introduce below two families of probability measures related to $\PP$, which will be convenient for presenting and discussing the conditional inference for the GLMMs under discussion.
First, consider a statistical model, $\overline\PP$, constructed on $ \Y^n \times \R^q $,
collecting the joint distributions of the $n$ responses and the $q$ random components. 
This model, called the \emph{joint-model}, 
represents the hypothetical situation in which the random components would be observable.
We will use the join-model to introduce and motivate the inferential techniques we propose.


It is convenient to introduce also the following family of probability measures on $(\mathcal{Y}^n,\mathcal{A}^n)$,
obtained by collecting the distributions constructed with the realisable values of the random components $B_1,\ldots, B_q$, in the following way
\begin{align} \label{eqn:2.1.3}
\PP^* =
 \left \{
 \!\!
 \begin{array}{ll}
  P_{\bm{\beta},\bm{b},\lambda}^*  : 
   \frac{d P^*_{\bm{\beta},\bm{b},\lambda}}{d \bm{\nu}}(\bm{y}) & = 
   f^*\left (\bm{y}; \bm{\beta},\bm{b},\lambda \right )  
   \\ &
   \mbox{ for all }
   \bm{y} \in \mathcal{Y}^n, 
  \bm{\beta}\in \Omega,\bm{b} \in \R^q, \lambda \in \Lambda 
  \end{array}
  \!\!
   \right \}
  .
\end{align}
The density of the probability measure referred above is given by
\begin{align*}
f^*\left (\bm{y}; \bm{\beta},\bm{b},\lambda \right )  \deffeq 
 \prod_{i=1}^n  f(y_i\vert \bm{B}=\bm{b};\bm{\beta},\lambda) =
 \prod_{i=1}^n
   a(y_i; \lambda) \exp\left \{-\tfrac{1}{2\lambda} \, d \left [ y_i;  \mu_i \left (\bm{\beta}, \bm{b}\right )\right ] \right \} \, ,
\end{align*}
for all $\By\in\Y^n$, $ \Bbeta \in \Omega$, $\lambda \in \Lambda$ and $\bm{b} \in \R^q $.
We call the family $\mathcal{P}^*$ the \emph{conditional model}. 
This family will be used for defining inference functions, and establishing the 
basic properties of the inference procedures we will propose.

\subsection{Conditional Inference for Models with a Single Random Component \label{SubSec:2.1}}

Under the joint model $\overline\PP$, the log-likelihood function for estimating $\bm{\beta}$, $\lambda$ and  $\sigma^2$ 
based on realisations  $\bm{y}$ and $\bm{b}$ of $\bm{Y}$ and $\bm{B}$, respectively, is
\begin{align}\label{eqn:2.1.1}
 l \left (\bm{\beta},{\lambda}, \sigma^2; \bm{y},\bm{b} \right ) \deffeq
  \sum_{i=1}^n \log f(y_i\vert \bm{B}=\bm{b};\bm{\beta},\lambda) + 
  \sum_{j=1}^q \log \varphi(b_j;\sigma^2 ) \, .
\end{align}
From this perspective, $\bm{b}$ is a S-sufficient statistic with respect to 
$\sigma^2$ (since the term of the likelihood function that contains $\sigma^2$ 
depends only on  $\bm{b}$ and not on $\bm{y}$),
and S-ancillary with respect to  $\bm{\beta}$ and ${\lambda}$ (since the term of the likelihood function that contains $\bm{\beta}$ and ${\lambda}$ involves $\bm{b}$ only conditionally). 
See \citeauthor{BarndorffNielsen2014} (\citeyear{BarndorffNielsen2014},  page 50) or
\citeauthor{Jorgensen2012} (\citeyear{Jorgensen2012},  Section 3.2) 
for formal definitions.

The decomposition of the likelihood function of the joint model $\overline\PP$, defined in (\ref{eqn:2.1.1}), motivates that the inference on $\sigma^2$ should be performed using the term
\begin{align*}
\sum_{j=1}^q \log \varphi({{b}}_j;\sigma^2 ) \, ,
\end{align*}
corresponding to base the inference on $\sigma^2$ on a sufficient statistic.
 Following the same line, the inference on ${\bm{\beta}}$ and ${{\lambda}}$ should be performed only using the term
\begin{align}\label{eqn:2.1.2}
 \sum_{i=1}^n \log f(y_i\vert \bm{B}=\bm{b};\bm{\beta},\lambda) ,
\end{align}
which corresponds to perform conditional likelihood-based inference given an ancillary statistic. 
Therefore, we propose to estimate $\bm{\beta}$ and ${\lambda}$ by inserting a reasonable prediction of $\bm{b}$, say $\tilde{\bm{b}}$ as defined below, into \eqref{eqn:2.1.2} and maximising for $\bm{\beta}$ and ${\lambda}$.
We argue in Section \ref{SubSec:2.2} that the procedure informally defined here yields sensible estimates.

We turn now to the problem of predicting $\bm{b}$. 
Under the joint model 
$\overline\PP$, it is natural to predict $\bm{b}$ by maximising 
$ l \left (\bm{\beta},{\lambda}, \sigma^2; \bm{y},\bm{b} \right )$ given in \eqref{eqn:2.1.1},
\ie by 
\begin{align}\label{eq:bhat1}
\hat{\bm{b}}({\bm{\beta}},{{\lambda}},\sigma^2;\bm{y}) = 
\argmax_{b_1,\ldots,b_q} 
\!
\left \{ 
\!
\sum_{i=1}^n \log f(y_i \vert \bm{B}=\bm{b},\bm{\beta},\lambda) + \sum_{j=1}^q\log \varphi(b_j;\sigma^2 )
\right \} .
\end{align}
However, it is convenient, as we will demonstrate in Section \ref{SubSec:2.2}, to use the following approximation to $\hat{\bm{b}}$,
\begin{align}\label{eq:btilde}
 \tilde{\bm{b}}({\bm{\beta}},{{\lambda}};\bm{y})  \deffeq 
 \Pi_{\B }  
   \left (
    \argmax_{b_1,\ldots,b_q} 
    \sum_{i=1}^n \log f(y_i\vert \bm{B}=\bm{b},\bm{\beta},\lambda) 
    \right ),
\end{align}
where $\B  \deffeq \{ \Bb\in\R^q : \tfrac{1}{q}\sum_{j=1}^q b_j = 0 \}$ is the 
subspace of  the vectors in $\R^q$ with mean zero, and  $ \Pi_{\B } : \R^q 
\rightarrow \B $ is the projection function given by 
$ \Pi_{\B } (\bm{y}) \deffeq \bm{y} - 1/q \sum_{j=1}^q y_j$.
Note, that $ \tilde{\bm{b}}$ is an approximation of $ \hat{\bm{b}}$, because the 
last term of the right side of \eqref{eq:bhat1} is maximised by setting $\bm{b}$ 
equal to zero. The approximation follows from the continuity of the function 
$\varphi(\cdot;\sigma^2 )$, which has a unique mode at zero,  and because 
$\tilde{\bm{b}}$ is in $\B $.

\subsection{Asymptotic Properties of the Conditional Inference Method  \label{SubSec:2.2}}

In this section, we formulate the inferential techniques  presented in Section 
\ref{SubSec:2.1} using the theory of inference functions (\cite{Jorgensen2012} 
and \cite{BarndorffNielsen2014}). 
We show that the estimated value of $\bm{\beta}$ and the predicted values of $\bm{b}$ are asymptotically Gaussian distributed when the variance, $\sigma^2$,  of the random component is small.

We consider below the  inference functions
\begin{align*}
	\psi^*_{\bm{\beta}}:  \Omega \times  \R^q  \times \mathcal{Y} \rightarrow 
	\R^k
	\mbox{ and }
	\psi^*_{\bm{b}}:    \Omega \times  \R^q  \times \mathcal{Y}  \rightarrow 
	\R^{q},
\end{align*}
which are equivalent to the score functions for estimating $\Bbeta$ and $\Bb$, 
under  $\PP^*$,  with $\lambda$ treated as a nuisance parameter. 
The inference functions  $\psi^*_{\bm{\beta}}$ and $\psi^*_{\bm{b}}$ 
referred above are defined by 
\begin{align} \label{eq:inf1}
	\psi^*_{\bm{\beta}}(\bm{\beta},{\Bb};\bm{y}) & =
	\sum_{i=1}^n 
	\tfrac{\partial}{\partial \bm{\beta}}
	d\left ( y_i;g^{-1}(\bm{x}_i^T\bm{\beta}+ \tilde{\bm{z}}_i^T \Bb)\right )
	=
	\sum_{i=1}^n 
	\bm{x}_i
	\frac{\tfrac{\partial}{\partial {\mu}_i}
		d(y_i;{\mu}_i)}
	{g'({\mu}_i)}, 
	\\ \label{eq:inf2}
	\psi^*_{{\Bb}}(\bm{\beta}, {\Bb};\bm{y}) &=
	\sum_{i=1}^n 
	\tfrac{\partial}{\partial \Bb}
	d\left ( y_i;g^{-1}(\bm{x}_i^T\bm{\beta}+ \tilde{\bm{z}}_i^T \Bb)\right )
	=
	\sum_{i=1}^n 
	\bm{z}_i
	\frac{\tfrac{\partial}{\partial {\mu}_i}
		d(y_i;{\mu}_i)}
	{g'({\mu}_i)} 
	\, .
\end{align}
Note that the score functions for estimating $\Bbeta$ and $\Bb$ are given by  
$\psi^*_{\bm{\beta}}$ and $\psi^*_{\bm{b}}$  multiplied by 
$-\tfrac{1}{2\lambda}$. 
However, since $\lambda$ is a positive number the solution to the score equations 
for  $\Bbeta$ and $\Bb$ are exactly the roots of $\psi^*_{\bm{\beta}}$ and 
$\psi^*_{\bm{b}}$; in this sense they are equivalent.
The inference function $\psi^*:  \Omega \times  \R^q  \times \mathcal{Y} 
\rightarrow \R^{k+q}$ given by
\begin{align*}
	\psi^*(\Bbeta,\Bb;\bm{y})=\left\{\left[\psi^*_{\bm{\beta}}(\Bbeta,\Bb;\bm{y})\right]^T,
	\left[\psi^*_{\bm{b}}(\Bbeta,\Bb;\bm{y})\right]^T\right\}^T
	 ,
\end{align*}
will be used for estimating $\Bbeta$ and predicting $\bm{b}$.
We denote the sequences of  roots of the inference functions  
$\psi^*_{\bm{\beta}}$ and $\psi^*_{\bm{b}}$ by  
$\{\widehat{\bm{\beta}}_n\}_{n\in \mathbb{N}}$ and  
$\{\widehat{\bm{b}}_n\}_{n\in \mathbb{N}}$,  
respectively, obtained when the number of observations, $n$, increases.

According to the classic theory of inference functions (see 
\citeauthor{Jorgensen2012}, \citeyear{Jorgensen2012}, Chapter 4), the 
estimating functions $\psi^*_{\bm{\beta}}$ and $\psi^*_{\bm{b}}$ yield 
consistent estimates under $\PP^*$. Moreover, the estimates of $\Bbeta$ and 
$\Bb$ are conditionally asymptotically normally distributed (see the details in 
appendix \ref{app:A1}). 
However, our primary interest is on estimating $\Bbeta$ under the extended 
generalised linear mixed model $\PP$.
For this purpose, we define below the inference function $\psi_{\bm{\beta}}:  
\Omega  \times \mathcal{Y} \rightarrow \R^k$ given by 
\begin{align} \label{eq:infFunPhi}
	\psi_{\bm{\beta}}(\bm{\beta};\bm{y}) \deffeq
	\psi^*_{\bm{\beta}}(\bm{\beta},\widehat{\Bb};\bm{y}), \mbox{ for all } 
	\Bbeta\in\Omega \mbox{ and all } \By \in \Y\, ,
\end{align}
where $\widehat{\Bb}$ is obtained from the joint solution, $(\widehat{\bm 
	{\beta}}, \widehat{\Bb})$, of the estimating equation 
$\psi^*_{\bm{\beta}}(\bm{\beta},\Bb;\bm{y})=0$ and 
$\psi^*_{\Bb}(\bm{\beta}, \Bb;\bm{y})=0$.
The theorem below shows that, under the assumed mild regularity conditions, the 
root of $\psi_{\bm{\beta}}$ are consistent  and asymptotically Gaussian 
distributed when the variance of the random components converges to zero.

\begin{theorem} \label{theor:AsymptNormality}
	Under the regularity conditions $i$-$iv$, the sequences  
	$\{\hat{\bm{\beta}}_n\}_{n\in \mathbb{N}}$ and  
	$\{\widehat{\bm{b}}_n\}_{n\in \mathbb{N}}$ are 
	consistent (in probability) under $\PP^*$.
	Moreover, $\{\hat{\bm{\beta}}_n\}_{n\in \mathbb{N}}$ is consistent (in 
	probability) under $\PP$. Both sequences are asymptotically Gaussian 
	distributed,
	when $n\to \infty$ and  $\sigma^2 \downarrow 0$.
\end{theorem}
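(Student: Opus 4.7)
The plan is to build the argument on the classical theory of inference functions (\cite{Jorgensen2012}, Chapter 4), exploiting the fact that, up to the scalar factor $-\tfrac{1}{2\lambda}$, $\psi^*_{\bm{\beta}}$ and $\psi^*_{\bm{b}}$ are precisely the score functions for $\bm{\beta}$ and $\bm{b}$ under the conditional model $\PP^*$. In particular, $\psi^*$ is an unbiased inference function at the true $(\bm{\beta},\bm{b})$ under every $P^*_{\bm{\beta},\bm{b},\lambda}\in \PP^*$. I would then proceed in three stages: (a) consistency and asymptotic normality of the joint root $(\hat{\bm{\beta}}_n,\hat{\bm{b}}_n)$ under $\PP^*$; (b) transfer of consistency of $\hat{\bm{\beta}}_n$ from $\PP^*$ to $\PP$ through the mixture representation of $\PP$ over the random components; (c) passage to the limit $\sigma^2\downarrow 0$ to identify the Gaussian limit under $\PP$.

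For stage (a), I would verify that conditions (i)--(iv) imply the standard hypotheses required by the inference-function machinery: unbiasedness of $\psi^*$ (from its identification with a score), the smoothness and dominated-derivative requirements (from (ii)--(iv)), and nonsingularity of the sensitivity matrix $-\E\{\partial \psi^*/\partial(\bm{\beta}^T,\bm{b}^T)\}$ (from the full-rank condition (i) on $\bm{X}$ and $\bm{Z}$ together with $\partial^2 d(\mu,\mu)/\partial\mu^2>0$). The usual M-estimation argument, i.e. a Taylor expansion of $\psi^*$ at the true parameter followed by a CLT on the linearised remainder, then yields consistency and asymptotic normality of $(\hat{\bm{\beta}}_n,\hat{\bm{b}}_n)$ under $\PP^*$, with Godambe sandwich covariance.

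For stage (b), observe that $\PP$ is a mixture of elements of $\PP^*$: for any event $A$,
\begin{align*}
P_{\bm{\beta},\lambda,\sigma^2}(A)=\int_{\R^q}P^*_{\bm{\beta},\bm{b},\lambda}(A)\,\prod_{j=1}^q \varphi(b_j;\sigma^2)\,d\bm{b}.
\end{align*}
Since stage (a) gives $\hat{\bm{\beta}}_n\to\bm{\beta}_0$ in $P^*_{\bm{\beta}_0,\bm{b},\lambda}$-probability for each $\bm{b}$, with convergence uniform on compact sets of $\bm{b}$ (a consequence of the dominated-derivative condition (iv) and the smoothness of $g$ and $d$), dominated convergence transfers the in-probability convergence to $\PP$. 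For stage (c), as $\sigma^2\downarrow 0$ the random vector $\bm{B}$ collapses to $\bm{0}$ in $\PP$-probability (by symmetry, unimodality, and the finite-variance assumption on $\varphi(\cdot,\sigma^2)$), so the distribution of $\BY$ under $\PP$ converges to its distribution under $P^*_{\bm{\beta}_0,\bm{0},\lambda}$; combining this with stage (a) evaluated at $\bm{b}=\bm{0}$ yields the asymptotic Gaussianity of $\hat{\bm{\beta}}_n$ under $\PP$.

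The main obstacle will be controlling the interplay between the two limits $n\to\infty$ and $\sigma^2\downarrow 0$: the bias introduced by plugging $\widehat{\bm{b}}$ into $\psi_{\bm{\beta}}$ in (\ref{eq:infFunPhi}) depends on $\bm{B}$ through the linear predictor and must vanish faster than the $n^{-1/2}$ fluctuations of $\psi^*_{\bm{\beta}}$ at the truth. This is precisely where the approximation $\tilde{\bm{b}}$ in (\ref{eq:btilde}) pays off: because the term $\sum_j\log\varphi(b_j;\sigma^2)$ is dropped and replaced by a projection onto $\B$, the remainder in the Taylor expansion of $\psi_{\bm{\beta}}$ around $\bm{B}=\bm{0}$ is of order $\sigma^2$ uniformly in $n$, so the double limit can be taken in a controlled order, delivering the stated Gaussian limit for both $\hat{\bm{\beta}}_n$ and $\hat{\bm{b}}_n$.
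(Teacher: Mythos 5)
Your three-stage architecture matches the paper's almost exactly: stage (a) is the paper's Lemma \ref{appLem:asympNorm} (regularity and unbiasedness of $\psi^*$ as a rescaled score, then the Godambe machinery from J{\o}rgensen's Chapter 4), and stage (b) is the paper's Lemma \ref{Lemma:ConsistP}, which uses precisely the mixture representation $P_{\bm{\beta},\lambda,\sigma^2}(A)=\int P^*_{\bm{\beta},\bm{b},\lambda}(A)\prod_j\varphi(b_j;\sigma^2)\,d\bm{b}$ followed by an interchange of limit and integration. The difference is in stage (c). The paper first fixes $\sigma^2$ and lets $n\to\infty$, obtaining a limit random vector $\bm{Z}_{\bm{\beta}}$ whose characteristic function is the $\bm{B}$-mixture $\E[\exp(-\tfrac12\bm{t}^TJ^{-1}_{\psi^*_{\bm{\beta}}}(\bm{\beta},\bm{B})\bm{t})]$ (via Portmanteau and the same integration-over-$\bm{B}$ device), and only then lets $\sigma^2\downarrow 0$ by Taylor-expanding this characteristic function in $\bm{B}$ around $\bm{0}$; the linear term drops because $\E[\bm{B}]=\bm{0}$ and the remainder vanishes as $\bm{B}$ collapses. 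This sequential-limit route buys you something important: all the $n$-asymptotics are done before $\sigma^2$ moves, so no uniformity in $n$ is ever needed.

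Your version of stage (c) — collapse the distribution of $\bm{Y}$ under $\PP$ to that under $P^*_{\bm{\beta}_0,\bm{0},\lambda}$ and then invoke stage (a) at $\bm{b}=\bm{0}$ — is where the gap sits. Convergence of the data distribution does not by itself transfer to convergence of the distribution of $\sqrt{n}(\hat{\bm{\beta}}_n-\bm{\beta})$ unless the map from the data law to the estimator's law is controlled uniformly in $n$; you acknowledge this ("the main obstacle") and then assert that the remainder in the expansion of $\psi_{\bm{\beta}}$ around $\bm{B}=\bm{0}$ is $O(\sigma^2)$ uniformly in $n$, crediting the projection in \eqref{eq:btilde}. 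That uniform bound is exactly the hard part, and nothing in your sketch (or in conditions (i)--(iv)) delivers it. If you instead reorder the limits as the paper does — $n\to\infty$ first for each fixed $\sigma^2$, then $\sigma^2\downarrow 0$ on the characteristic function of the limit — the uniformity question disappears and the statement as the paper actually proves it (a sequential, not joint, limit) follows from the dominated-convergence interchange you already set up in stage (b). I would recommend restructuring stage (c) accordingly rather than attempting the uniform-in-$n$ remainder estimate.
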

\begin{proof}
	See Lemma \ref{appLem:asympNorm} for the consistency of  
	$\{\widehat{\bm{\beta}}_n\}_{n\in \mathbb{N}}$ and  
	$\{\widehat{\bm{b}}_n\}_{n\in \mathbb{N}}$ under $\PP^*$.
	See Lemma \ref{Lemma:ConsistP} for the consistency in probability of 
	$\{\hat{\bm{\beta}}_n\}_{n\in \mathbb{N}}$ under $\PP$ and  Theorem 
	\ref{prop:AsymConvApp} in Appendix \ref{App:proof} for the asymptotic 
	normality when the variance of the random components is sufficiently small.
\end{proof}

The parametrisation of the family $\PP^*$ defined above is not identifiable.
Note, that a natural parametrisation of $\PP^*$ using 
the triplet 
$(\Bbeta, \Bb, \lambda)\in \Omega\times\R^q\times\Lambda$
is not identifiable.
Indeed, according to the Lemma \ref{LemmaA01} proved in the appendix \ref{app:A0}, 
for any $i\in\{1, \ldots ,n\}$ and any choice of $\Bbeta$ ,  $\Bb$ and $\delta>0$
there exists a $\bm{\beta}_\delta \in\Omega$ such that 
$\eta_i\left ( \bm{\beta} , \bm{b} \right ) = \eta_i\left ( \bm{\beta}_\delta , \bm{b} - \delta \right )$. 
A convenient solution to this issue is to introduce a constraint and require that $\Bb$ takes values in $\B $ (\ie the sub-space of $\R^q$ of vectors with mean zero), which yields an identifiable parametrisation of $\PP^*$. We adopt this parametrisation and re-write here \eqref{eqn:2.1.3} in the form
\begin{align*} 
\PP^* =
 \left \{
 \!\!
 \begin{array}{ll}
  P_{\bm{\beta}^*,\bm{b^*},\lambda}^*  : 
   \frac{d P^*_{\bm{\beta}^*,\bm{b^*},\lambda}}{d \bm{\nu}}(\bm{y}) & = 
   f^*\left (\bm{y}; \bm{\beta}^*,\bm{b^*},\lambda \right )  
   \\ &
   \mbox{ for all }
   \bm{y} \in \mathcal{Y}^n, 
  \bm{\beta}^*\in \Omega^*,\bm{b^*} \in \B , \lambda \in \Lambda 
  \end{array}
  \!\!
   \right \}
  ,
\end{align*}
so the mapping from $\Omega\times\B\times\Lambda $ to $\PP^*$  given by
$(\Bbeta^*, \Bb^*, \lambda) \mapsto P_{\Bbeta^*,\Bb^*, \lambda}$
is a bijection.

The sequences of estimates 	$\{\widehat{\bm{\beta}}^*_n\}_{n\in \mathbb{N}}$ 
and  
$\{\widehat{\bm{b}}^*_n\}_{n\in \mathbb{N}}$ obtained as roots to the inference 
functions defined as above but with the new identifiable parametrisation, yields 
the same maximum likelihood values as a consequence of Lemma 
\ref{LemmaA01} proved in the appendix \ref{app:A0}. By the law of large 
numbers and Lemma~\ref{appLem:asympNorm}, 
$(\widehat{\bm{\beta}}^*_n,\widehat{\bm{b}}^*_n)$ converges to 
$(\widehat{\bm{\beta}}_n,\widehat{\bm{b}}_n)$ in probability under 
$P_{\Bbeta,\Bb, \lambda}^*$ for 
$q$ and $n$ converging to 
infinity.

In Section~\ref{SubSec:3.2}, we study the distribution of $\bm{\hat{\beta}}$ in a 
simulated example, where we assume that the random components follow a 
Gaussian distribution.

\subsection{A Simple Algorithm for Conditional Inference \label{SubSec:2.3}}

The following algorithm implements the inference method described above. 
The algorithm starts by setting the initial values $\bm{\beta}^{(0)}$ and 
$\lambda^{(0)}$ for the parameters $\bm{\beta}$ and $\lambda$. We used the 
estimated values of the corresponding parameters of a generalised linear model 
defined with the same distribution and link function as in the extended GLMM in 
study, and with the linear predictor given by the fixed effects of the extended 
GLMM in 
discussion. The algorithm repeats the following two steps, starting with $m=0$, 
until convergence:
\begin{enumerate}
	\item Let $\bm{\beta}^{(m)}$ and $\lambda^{(m)}$ be the current estimates 
	of the parameters $\bm{\beta}$ and $\lambda$. Set
	\begin{align*}
		\bm{b}^{(m+1)} =               \argmax_{b_1,\ldots,b_q} 
		\sum_{i=1}^n \log f(y_i\vert 
		\bm{B}=\bm{b},\bm{\beta}^{(m)},\lambda^{(m)}), 
	\end{align*}
	and
	\begin{align*}
	\bm{b}^*_{(m+1)}= 
	\tilde{\bm{b}}(\bm{\beta}^{(m)},\lambda^{(m)};\bm{y})
	= 
         \Pi_{\B }  
         \left ( \bm{b}^{(m+1)}
          \right ),
	\end{align*}
with $\tilde{\bm{b}}$ is defined as in \eqref{eq:btilde}.
	\item  Given the latest predicted values of the random components denoted 
	$\bm{b}^*_{(m+1)}$, $\bm{\beta}^{(m+1)}$ and $\lambda^{(m+1)}$ are 
	estimated by maximising \\$\prod_{i=1}^n f^*(y_i;\bm{\beta}, 
	\bm{b}^*_{(m+1)}, 
	\lambda)$ with respect to $\bm{\beta}$ and $\lambda$.
\end{enumerate}

\noindent
After convergence has been obtained, we estimate the variance, finding the value of $\sigma^2$ that maximises the integral
\begin{align}
	\int_{\B} 
	g(\hat{\bm{b}};\bm{b},\bm{\Sigma}_{\hat{\bm{b}}})
	\prod_{j=1}^q \varphi({b}_j;\sigma^2)d\bm{b},\label{eq:varInt}
	\, ,
\end{align}
where $\hat{\bm{b}}$ denotes the value of 
$\bm{b}^{(m+1)}$ in the last round of the algorithm. 
Here, 
$g(\cdot;\bm{b},\bm{\Sigma}_{\hat{\bm{b}}})$ denotes the density of the 
predicted values from the final iteration, $\hat{\bm{b}}$, with expectation 
$\bm{b}$ and covariance $\bm{\Sigma}_{\hat{\bm{b}}}$. In the case where 
$\sigma^2$ is small enough and $n$ is large enough, this density is close to the 
multivariate Gaussian 
density, see Theorem~\ref{theor:AsymptNormality} for details. 
 In Appendix~\ref{app:B}, calculations of the above integral are given in the case where $g$ and $\varphi$ are densities of Gaussian distributions.
\subsection{Conditional Inference for Models with Complex Random 
Components  \label{SubSec:2.4}}

This section extends the methods introduced in section \ref{SubSec:2.2} to a 
context with complex random components. We first consider non-nested 
random components, and then we study a scenario where the random components 
are nested or a combination of the two cases. 

When the random components are not nested, the values  of the random 
components are easily predicted using the already described method. To simplify 
the notation, consider a one dimensional extended GLMM with two vectors of 
non-nested 
random components (each corresponding to a clustering of the observations), say 
$\bm{B}_1$ and $\bm{B}_2$ with length $q_1$ 
and $q_2$, respectively. 
 We assume that $Y_1,\ldots,Y_n$ are conditional independent random variables given  $\bm{B}_1$ and $\bm{B}_2$, and conditionally distributed according to a dispersion model, with conditional density 
 $f( \, \cdot \, \vert \bm{B}_1=\bm{b}_1,\bm{B}_2=\bm{b}_2,\bm{\beta},\lambda)$,
 where $f$ is defined in  \eqref{eq:dispDen}.


Recall, that values of the random components were predicted using 
Equation~\eqref{eq:btilde}, which is equivalent to solving the inference functions 
in \eqref{eq:inf1} and \eqref{eq:inf2}. This equation can easily be adapted to the 
situation with multiple non-nested random components.  To do so, we replace 
$\B$ by 
$
\widetilde\B  \deffeq 
\left\{ (\Bb_1, \Bb_2)\in\R^{q_1 + q_2} : 
\Bb_1\in \B (\R^{q_1}) \mbox{ and } \Bb_2\in \B(\R^{q_2}) \right \}
$
(where $\B(\R^{q})$ is the space of vectors of $\R^{q}$ with mean zero)
and define 
\begin{align*}
 \tilde{\bm{b}}({\bm{\beta}},{{\lambda}};\bm{y})  \deffeq 
 \Pi_{\widetilde\B }  
   \left [
   \argmax_{(\bm{b}_1,\bm{b}_2)\in \R^{q_1+q_2}} \sum_{i=1}^n \log 
   f(y_i\vert \bm{B}_1=\bm{b}_1,\bm{B}_2=\bm{b}_2;\bm{\beta},\lambda)
    \right ].
\end{align*}

We turn now to the case of two nested vectors of random components 
$\bm{B}_1$ and 
$\bm{B}_2$, where $\bm{B}_1$ is nested in $\bm{B}_2$, that is, the clusters 
corresponding to the entries in $\bm{B}_2$ groups multiple clusters associated 
with $\bm{B}_1$. Therefore, the variation 
in $\bm{B}_1$ should be interpreted as the remaining variation not explained by 
$\bm{B}_2$.  In this case, we estimate the model including only the random 
component $\bm{B}_1$. After predicting (temporary) values for $\bm{B}_1$ 
denoted by $\bar{\bm{b}}_1$, we predict the final values of $\bm{b}_2$ by
\begin{align*}
\hat{\bm{b}}_2=(\bm{Z}_2^T\bm{Z}_2)^{-1}\bm{Z}_2^T\bar{\bm{b}}_1,
\end{align*}
where $\bm{Z}_2$ a $q_1\times q_2$ dimensional matrix with the $(i,j)$'th 
entry equal to one if the cluster corresponding to the $i\tth$ entry of $\bm{B}_1$ 
is contained in the $j\tth$ cluster associated with the $j\tth$ entry of $\bm{B}_2$,
 and zero otherwise. 
Next, the predicted values of $\bm{b}_1$ is updated to the final values by
\begin{align*}
\hat{\bm{b}}_1=\bar{\bm{b}}_1-\bm{Z}_2	\hat{\bm{b}}_2.
\end{align*}
These methods can easily be generalised to the multivariate case by using the 
approach described in Section~\ref{Sec:3}.

\section{Multivariate Models \label{Sec:3}}

 In this section, we extend the methods described so far in one dimension to a multivariate context.
Consider $d$ response vectors simultaneously observed, each of them following an  GLMM described in Section~\ref{Sec:2}. Here the $d$ responses 
might follow different dispersion models, use different link functions, but the $d$ 
marginal extended GLMMs must have a common  random component  with the 
same clusters for each of the response vectors.
The inference method presented in the Sections~\ref{SubSec:2.1} - 
\ref{SubSec:2.4} yields predicted values of the random components directly as an 
additional product of the estimation process. 

\subsection{Basic Setup \label{SubSec:3.1}}

We introduce the following notation required for formally defining the 
multivariate model we have in mind.
Let $\bm{Y}=\{\bm{Y}_1,\ldots,\bm{Y}_d\}$ be a $n\times d$ dimensional 
response variable matrix, and 
$\bm{B}=\{\bm{B}_{(1)},\ldots,\bm{B}_{(d)}\}=\{\bm{B}_1,\ldots,\bm{B}_q\}^T
 $ a $q\times d$ dimensional matrix of random components. Each column of 
$\bm{Y}$ corresponds to $n$ response variables in a univariate model. We 
assume, that 
the rows of $\bm{B}$ are independent and identical distributed according to a 
multivariate distribution which is absolute continuous with respect to the 
Lebesgue 
measure,  symmetric around the vector of zeros, unimodal, and with finite 
moments up to fourth order.
We will let $\bm{\Sigma}$ denote a covariance matrix of the distribution and $\varphi(\cdot,\bm{\Sigma})$ the density. Often, this distribution will be assumed to be multivariate Gaussian with expectation zero and covariance matrix given by $\bm{\Sigma}$.

For $i=1,\ldots,n$ and $j=1,\ldots,d$, we assume that  $Y_{ij}$ is conditional distributed according to a dispersion model given $\bm{B}_{(j)}=\bm{b}_{(j)} $.
That is, $Y_{ij}\vert \bm{B}_{(j)}=\bm{b}_{(j)} \sim 
\text{D}(\mu_{ij},\lambda_j)$ for $i=1,\ldots,n$ and $j=1,\ldots,d$, where 
$D(\mu;\lambda)$ 
denotes the dispersion model distribution with expectation $\mu$ and dispersion 
$\lambda$. The conditional expectation, 
$\mu_{ij}$, is connected to the linear predictor, $\eta_{ij}$, through the known 
link function denoted $g_j$, that is, 
$g_j(\mu_{ij})=\eta_{ij}=\bm{x}_{ij}^T\bm{\beta}_j+\bm{z}_{i}^T 
\bm{b}_{(j)}$, where $\bm{x}_{ij}$ and $\bm{z}_{i}$ denote the vector of 
explanatory variables  and a location vector, respectively. Notice, that like in the 
one dimensional model, $\bm{z}_{i}$ has one entry equal to one and the 
remaining entries are equal to zero. Thus,  $\bm{z}_{i}$ has a one in the entry 
corresponding to the cluster that the $i$th individual 
belongs to. 
The conditional density of $Y_{ij}$ given $\bm{B}_{(j)}$ is denoted by $f_j$. 

We assume,  that $Y_{ij}$ and $Y_{i'j}$ are conditionally independent given 
$\bm{B}_{(j)}=\bm{b}_{(j)} $ for $i\neq i'$ ($i,i'=1,\ldots,n$).  Moreover, the 
structure of the model implies that $Y_{ij}$ and $Y_{i'j'}$ are conditionally 
independent given 
$\bm{B}_{(j)}$ and $\bm{B}_{(j')}$ for all $i,i'=1,\ldots,n$ and $j,j'=1\ldots,d$ 
such that $j\neq j'$.

\subsection{Simulation Studies \label{SubSec:3.2}}

In this section, we present results of two simulation studies illustrating basic 
properties of the proposed estimation procedure. Moreover, we compare the 
behaviour of the proposed estimates with two other inference methods: the 
multivariate Laplace approximation suggested by \cite{Breslow1993} (see 
Appendix~\ref{app:C} for details) and a 
Hermite 
quadrature estimation procedure. 
Two simulation studies are presented to study the distribution of the estimates 
when the entries in the covariance matrix are varied, and the bias of the estimated 
parameters when we increase the numbers of clusters of 
the random component (and thereby the number of observations).
In both simulation studies, we simulate a two dimensional generalised linear 
mixed model, where  $Y_{ij}$ for $i=1,\ldots,n$ and $j=1,2$ denotes the 
response 
variables. We follow the notation introduced above and let $\bm{B}_{(1)}$ and 
$\bm{B}_{(2)}$ denote $q$-dimensional random vectors representing the 
random components in the model. We assume that $Y_{11},\ldots,Y_{n2}$ are 
conditionally independent given $\bm{B}_{(1)}$ and $\bm{B}_{(2)}$. 
Moreover, we assume that given $\bm{B}_{(1)}$ and $\bm{B}_{(2)}$, 
$Y_{i1}$ and $Y_{i2}$ are conditionally distributed according to a Gaussian and 
a Poisson distribution, respectively, with conditional expectations given by
\begin{align*}
	\E[Y_{i1}\vert 
	\bm{B}_{(1)}&=\bm{b}]=\bm{x}_{i1}^T\bm{\beta}+\bm{z}_{i}^T\bm{b} 
	\quad \text{for } i=1,\ldots,n,\\
	\E[Y_{i2}\vert \bm{B}_{(2)}&=\bm{b}]=\exp \big ( 
	\bm{x}_{i2}^T\bm{\beta}+\bm{z}_{i}^T\bm{b}\big) \quad \text{for } 
	i=1,\ldots,n,
\end{align*}
where $\bm{\beta}=(\beta_1,\beta_2)=(1.90, 0.21)$. 
The Gaussian 
conditional distribution is assumed to have a variance of $0.5$ which is not varied 
in the simulations.

We assume that $\bm{B}^T=(\bm{B}_{(1)}^T,\bm{B}_{(2)}^T)$ is Gaussian 
distributed with expectation zero and covariance structure given by
\begin{align*}
	\Cov({B}_{(1)}^l,{B}_{(2)}^l)&= \bm{\Sigma} \quad \text{for } \, l=1,...,q, 
	\\
	\Cov({B}_{(1)}^l,{B}_{(2)}^{k})&=\,0\, \quad \text{for } \, l,k=1,...,q \text{ 
	such that } l\neq k,
\end{align*} 
where ${B}_{(j)}^l$ denotes the $l\tth$ entry in $\bm{B}_{(j)}$ for $j=1,2$, 
and 
\begin{align}
	\bm{\Sigma} = \text{const}  \begin{pmatrix}
		0.28 & 0.09 \\
		0.09 & 0.12
	\end{pmatrix} \label{eq:simCov},
\end{align}
with the constant depending on the simulation study. That is,
\begin{align*}
	\bm{B} \sim N_{2q}(\bm{0},\bm{\Sigma}\otimes\bm{I}_n),
\end{align*}
where  $\bm{I}_m$ denotes a $m$-dimensional identity matrix.

In the first simulation study, we simulate the above described model for three 
different covariance matrices, corresponding to three different values of the 
constant in \eqref{eq:simCov}. In that way, we can examine the sensitivity in the 
normality of the estimates to an increase in the variance. 
Theorem~\ref{theor:AsymptNormality} states that under some regularity 
conditions, 
the estimated values of $\bm{\beta}$ should be Gaussian distributed when the 
variance of the random components goes to zero. That is, the lower the constant in 
\eqref{eq:simCov} is, the closer is the distribution of $\bm{\beta}$ to a Gaussian 
distribution. In this simulation study, we used the following constants:  $c_1=1$, 
$c_2=50$ and $c_3=100$. 
In each of the three simulation studies we simulate $500$ datasets and estimate the above described model for each simulation. The results are presented in Figure~\ref{fig:qqplotSim}.

In the second simulation study, we fix the covariance matrix of the random 
components to $\bm{\Sigma}$ defined in \eqref{eq:simCov} with the constant 
set to one. In this study, we vary the lengths of $\bm{B}_{(1)}$ and 
$\bm{B}_{(2)}$ between the values $10$, $50$ and $100$, whereas the 
lengths was fixed to $60$ in the above described simulation study.
For each value of $q$ (the length of each vector of random components), we 
simulate the model $500$ times and estimate the bias and standard errors of the 
parameters. 

\begin{figure}[htbp!]
	\centering
\includegraphics[width=0.8\textwidth]{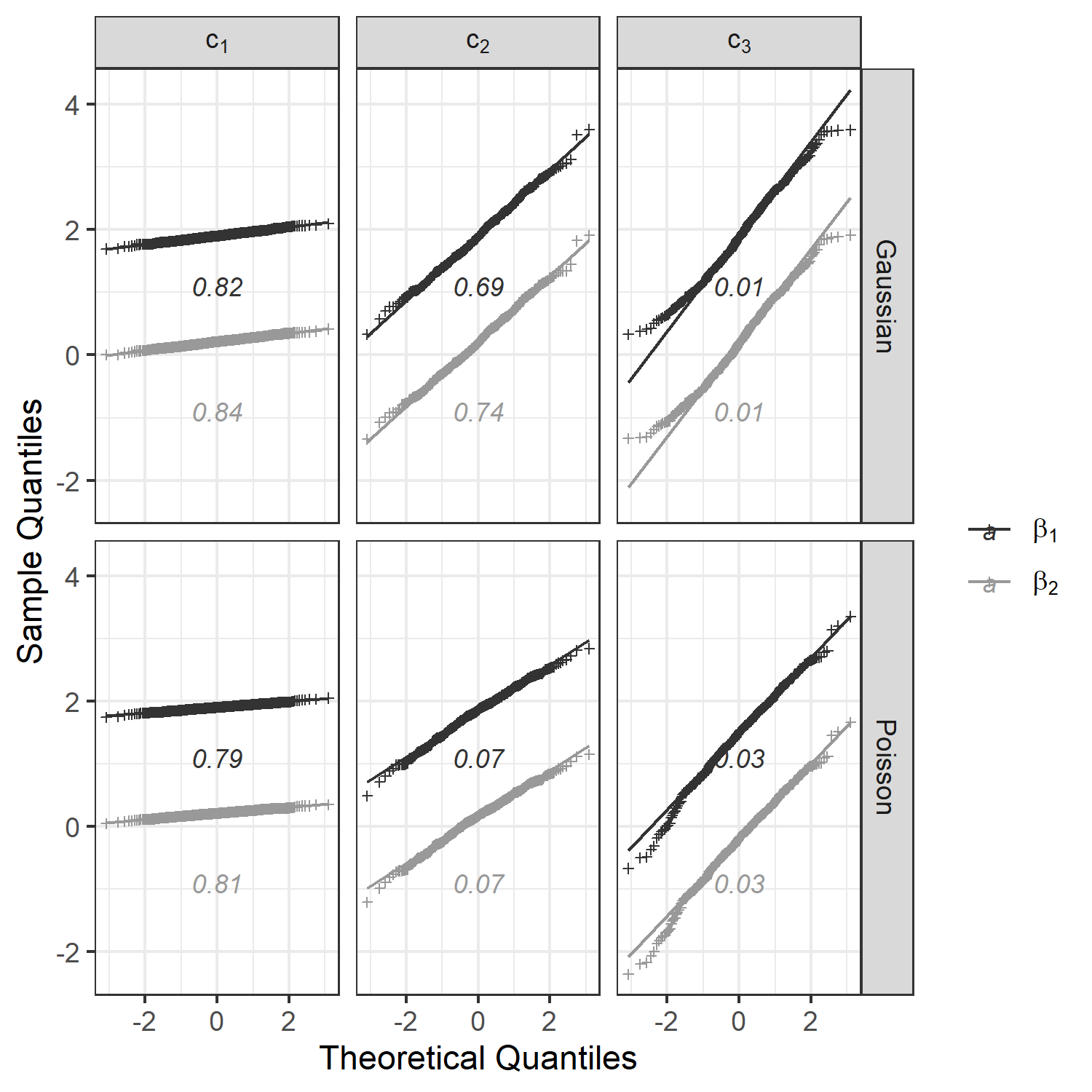}
\caption{QQ-plot of the theoretical Gaussian quantiles versus the sample 
quantiles of the estimated values of $\beta_1$ and $\beta_2$ in the described multivariate 
generalised linear mixed model for different sizes of $\bm{\Sigma}$. The numbers in the plots are the resulting 
p-values from Shapiro Wilk tests for normality. \label{fig:qqplotSim}}
\end{figure}

\begin{figure}[htbp!]
	\centering
\includegraphics[width=0.8\textwidth]{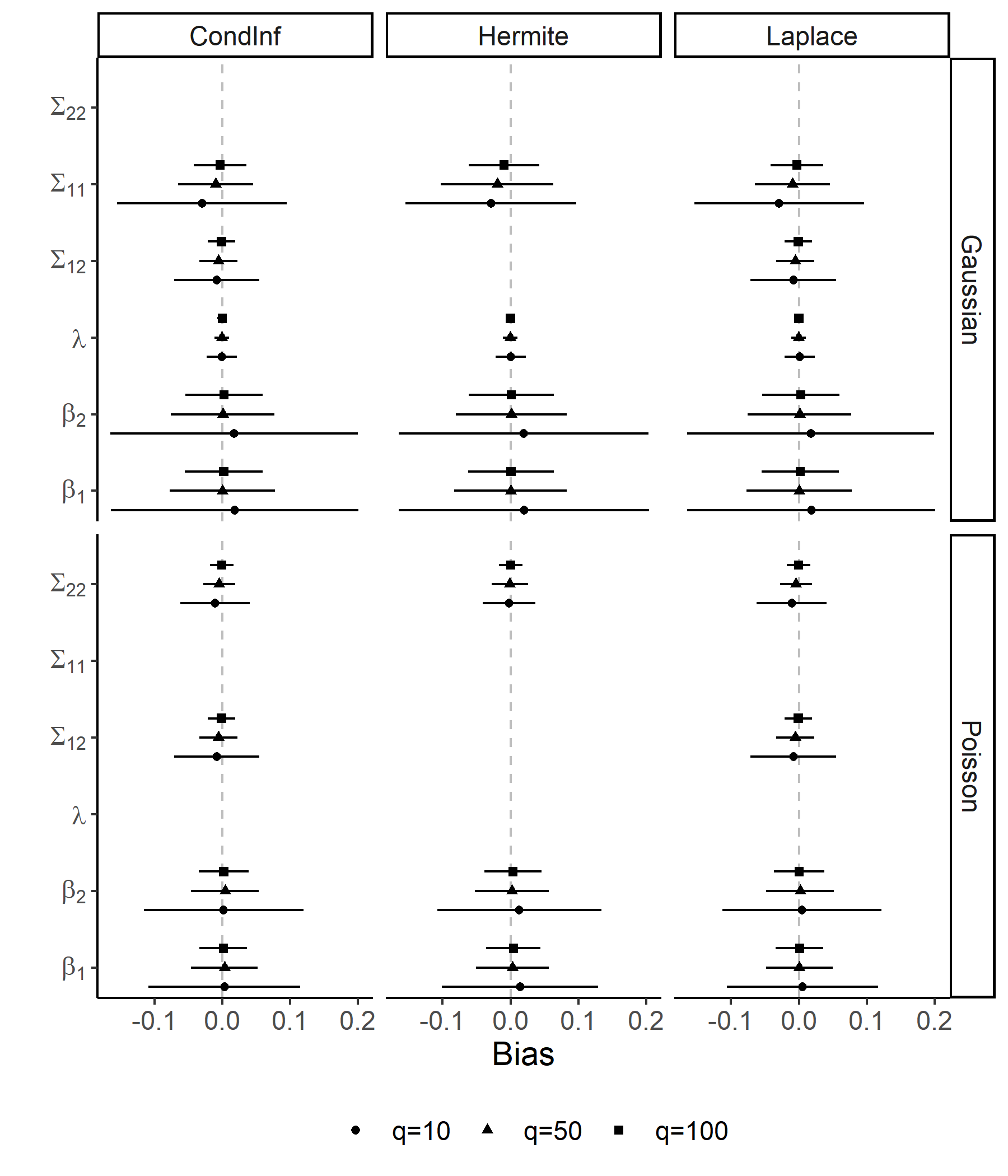}
	\caption{Estimated bias calculated from simulations of the described model for 
	three different lengths of the vectors of random components using three 
	different inference methods. The error bars show the estimated bias plus/minus 
	the estimated 
standard errors. The Hermite approximation was applied to each univariate
marginal model; therefore, there are no estimates for covariances when using this method. \label{fig:varyLevelsPlot}}
\end{figure}

\clearpage

\newpage
\section{Discussion \label{Sec:5}}

The inference method introduced in this paper extends the applicability of 
standard GLMMs in two ways: first, it allows for defining and inferring 
multivariate GLMMs, provided there exist random components representing 
clusters of observations defined in the same way in each of the marginal GLMMs; 
second, it allows to use non-Gaussian distributions for the random components. 

Remarkably, the marginal models of the defined MGLMMs can 
be of different statistical nature and at the same time represent 
complex dependence structures. Therefore, those models provide a rather flexible 
tool for 
applications. For instance, in \cite{pelck2020C} the MGLMM
contained marginal GLMMs for binomial and for Poisson distributed
responses, which appeared naturally in the process of modelling a 
system for monitoring the development of roots over time. Moreover, 
the MGLMM used in \cite{pelck2020C} could be used to detect and represent a 
first-order 
Markovian dependence induced by repeated measurements applied at 
the same experimental units over time (see also 
\cite{Shanmugam2021} for a similar application on roots 
development studies). Another example of MGLMMs including 
marginal GLMMs of different nature can be found 
in \cite{Pelck2021F}, where marginal GLMMs defined with the 
Gamma, binomial and the compound Poisson families of distributions 
were used for simultaneously modelling the development of a fungal 
infection in apples and the concentration of a series of volatile organic 
compounds, observed along time. In a third study, \cite{pelck2021D} 
used MGLMMs to simultaneously describe the students' marks obtained 
in different admission exams at the University (Gaussian distributed), 
and the performance in the course 
of geometry measured as the number of attempts required to pass the 
course (a Cox proportional model with discrete time). Those 
examples illustrate the usefulness of the MGLMMs studied in this paper. 

The inference method proposed in this paper does not involve integration of 
conditional likelihood quantities, which might be advantageous with respect to 
naive integration based methods, as illustrated in the simulation study presented in 
Section~\ref{SubSec:3.2}. The performance of the new introduced method is 
similar to the method introduced by \cite{Breslow1993}, when we assume the 
random components to be Gaussian distributed. 
Indeed, when the random components are Gaussian distributed, the inference 
functions  $\psi^*_{\bm{\beta}}$ and $\psi^*_{\bm{b}}$ 
are similar (but not the same) to the approximate score functions  used in \cite{Breslow1993}, which are based on a Laplace approximation of the likelihood function of the GLMM $\PP$.
In this case, the inference function in \eqref{eq:inf1} is equivalent to the score 
equation of the fixed effects in \cite{Breslow1993}, whereas the inference 
function in \eqref{eq:inf2} differs from the score equation for the random effects 
by the additive term $\sigma^2 \bm{I}_{q} {\bm{b}}$, which has expectation 
zero. 
We extend the Laplace approximation method proposed by \cite{Breslow1993} 
to a multivariate context in the Appendix \ref{app:C}.

The GLMMs and MGLMMs described in this paper are constructed 
using dispersion models instead of exponential dispersion models as 
usually done in the literature of GLMMs,  see \cite{Breslow1993} and 
the literature referred there. We remark that the class of dispersion 
models defined in \cite{Jorgensen1987, Jorgensen1996} is much 
larger than the class of 
exponential dispersion models; see \cite{Cordeiro2021} and  
\cite{Labouriau2020} for a list of examples and a discussion of the 
extension of the class of dispersion models.


\bibliography{CondInf4MGLMM3}

\newpage
\appendix
\section{Appendix}

\subsection{On the identifiability of the family of conditional densities $\PP^*$\label{app:A0}}

Here, we show that the family of conditional densities given by \eqref{eqn:2.1.3} 
is not identifiably parametrised by 
$\left ( \Bbeta,  \Bb , \lambda \right ) \in \Omega\times\R^q \times \Lambda$.
\begin{lemma}\label{LemmaA01} 
For any $i\in\{1, \ldots ,n\}$ and any choice of $\bm{\beta}$ ,  $\bm{b}$ and 
$\delta>0$,
there exist $\bm{\beta}_\delta \in\Omega$ such that 
$\eta_i\left ( \bm{\beta} , \bm{b} \right ) = \eta_i\left ( \bm{\beta}_\delta , \bm{b} - \delta \right )$.
\end{lemma}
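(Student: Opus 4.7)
The plan is to reduce the claimed identity to a single scalar linear equation in $\bm{\beta}_\delta$ and then exhibit a solution explicitly. The key observation is that the scalar $\delta$ is being subtracted from every coordinate of $\bm{b}$, and because the allocation vector $\bm{z}_i$ has exactly one entry equal to $1$, such a uniform shift of $\bm{b}$ produces a simple uniform shift of $\bm{z}_i^T \bm{b}$.

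First, I would expand $\eta_i(\bm{\beta}_\delta, \bm{b} - \delta \bm{1}_q)$ using the definition of the linear predictor. Since $\bm{z}_i^T \bm{1}_q = 1$, this becomes $\bm{x}_i^T \bm{\beta}_\delta + \bm{z}_i^T \bm{b} - \delta$. Equating this with $\eta_i(\bm{\beta}, \bm{b}) = \bm{x}_i^T \bm{\beta} + \bm{z}_i^T \bm{b}$, the cluster term cancels and the claim reduces to finding $\bm{\beta}_\delta \in \Omega$ satisfying the single linear equation $\bm{x}_i^T (\bm{\beta}_\delta - \bm{\beta}) = \delta$.

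Second, I would construct $\bm{\beta}_\delta$ explicitly. Regularity condition (i) asserts that $\bm{X}$ has full column rank $k$; in particular no row of $\bm{X}$ is identically zero, so $\bm{x}_i \neq \bm{0}$ and $\|\bm{x}_i\|^2 > 0$. Taking
\[
\bm{\beta}_\delta \;:=\; \bm{\beta} \,+\, \frac{\delta}{\|\bm{x}_i\|^2}\, \bm{x}_i
\]
gives $\bm{x}_i^T(\bm{\beta}_\delta - \bm{\beta}) = \delta$ by direct substitution, establishing the identity. This $\bm{\beta}_\delta$ lies in $\Omega$ under the paper's convention that $\Omega \subseteq \R^k$ is the ambient parameter space for the fixed effects.

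The argument is essentially elementary linear algebra once the interpretation of $\delta$ as a uniform shift on $\bm{b}$ is made explicit; the main thing to be careful about is precisely this interpretation, together with the routine check $\bm{\beta}_\delta \in \Omega$. I do not foresee a genuine obstacle. I would also note, for the sake of the non-identifiability conclusion drawn in the main text, that the same construction can be made to work simultaneously for all $i$ whenever $\bm{1}_n$ lies in the column space of $\bm{X}$ (that is, the design contains an intercept), in which case the linear system $\bm{X}(\bm{\beta}_\delta - \bm{\beta}) = \delta \bm{1}_n$ has a solution that produces the same perturbation globally.
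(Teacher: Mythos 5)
Your overall strategy coincides with the paper's: use the fact that the allocation vector $\bm{z}_i$ has a single unit entry, so that $\bm{z}_i^T(\bm{b}-\delta)=\bm{z}_i^T\bm{b}-\delta$, and then absorb the scalar $\delta$ into the fixed effects by solving $\bm{x}_i^T(\bm{\beta}_\delta-\bm{\beta})=\delta$. Your explicit choice $\bm{\beta}_\delta=\bm{\beta}+\delta\,\bm{x}_i/\|\bm{x}_i\|^2$ is a legitimate (and marginally more general) alternative to the paper's, which assumes the first covariate is an intercept and simply adds $\delta$ to its coefficient. Your closing remark that a single $\bm{\beta}_\delta$ works for all $i$ simultaneously exactly when $\bm{1}_n$ lies in the column space of $\bm{X}$ is also well taken: that uniformity is what the non-identifiability conclusion drawn in Section 2.3 actually needs, and the paper secures it automatically by perturbing only the intercept coordinate.

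There is, however, one genuine error in your justification of the non-degeneracy condition $\bm{x}_i\neq\bm{0}$. Full column rank of $\bm{X}$ does \emph{not} imply that no row of $\bm{X}$ vanishes: the matrix $\bigl(\begin{smallmatrix}1&0\\0&1\\0&0\end{smallmatrix}\bigr)$ has rank $2$ but a zero third row. If $\bm{x}_i=\bm{0}$ for the chosen $i$, then $\bm{x}_i^T(\bm{\beta}_\delta-\bm{\beta})=0\neq\delta$ for every $\bm{\beta}_\delta$, and the lemma's conclusion actually fails for that $i$; so the statement requires a hypothesis beyond regularity condition (i). The paper supplies it by assuming (under the guise of ``without loss of generality'') that the model contains an intercept, i.e.\ the first entry of each $\bm{x}_i$ equals $1$; under that assumption both the paper's construction and yours go through. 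You should replace the rank argument by this intercept assumption, or by an explicit hypothesis that $\bm{x}_i\neq\bm{0}$.
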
 
\begin{proof} 
Take arbitrary $i$, $\left ( \bm{\beta} , \bm{b} \right )$ and $\delta>0$.
Note that 
$\bm{z}_i^T \left (\bm{b} - \delta \right ) =  \bm{z}_i^T \bm{b} - \delta $
because, by construction, there is one entry of  the allocation vector $\bm{z}_i$ that is equal to one and the other entries vanish.
Assume, without loss of generality, that the first entry of the vector $\bm{x}_i$ is 
equal to $1$ (\ie, the fixed effect of the GLMM contains an intercept) so that 
$\bm{x}_i^T \bm{\beta} = \bm{\beta}_1 + \bm{\tilde {x}}_i^T 
\bm{\tilde{\beta}}$, where $\bm{\tilde {x}}_i$ and $\bm{\tilde{\beta}}$ are the 
$(k-1)$-dimensional vectors obtained by eliminating the first entry of 
$\bm{x}_i$ and $\bm{\beta}$, respectively. 
Taking 
$\bm{\beta}_\delta = \left ( \bm{\beta}_1 + \delta,  \bm{\beta}_2 , \ldots , \bm{\beta}_k \right )$ 
we have  that
\begin{align*}
\eta_i\left ( \bm{\beta} , \bm{b} \right )  
    & = \bm{x}_i^T \bm{\beta} + \bm{z}_i^T \bm{b} 
        =  \bm{\beta}_1 + \bm{\tilde {x}}_i^T \bm{\tilde{\beta}} + \bm{z}_i^T 
        \bm{b}
        =  (\bm{\beta}_1 + \delta) + \bm{\tilde {x}}_i^T   \bm{\tilde{\beta}}   + 
        \bm{z}_i^T (\bm{b} - \delta ) \\
   &  =  \bm{{x}}_i^T \bm{\beta}_\delta + \bm{z}_i^T ( \bm{b} - \delta ) 
        =  \eta_i\left ( \bm{\beta}_\delta , \bm{b} - \delta \right )
\end{align*}
The proof follows since $i$, $\left ( \bm{\beta} , \bm{b} \right )$ and $\delta$ were taken arbitrarily.
\end{proof} 

\newpage

\subsection{Technical Proofs of the Asymptotic Distribution of the Conditional Inference Based Estimates\label{app:A1}}
In this appendix, we present a sequence of lemmas and propositions that will 
culminate with the proof of the Theorem \ref{theor:AsymptNormality}, which 
establishes consistency and joint asymptotic normality of the proposed estimator 
of $\Bbeta$ and the predictor of $\Bb$ for small values of the variance of the 
random components.

\subsubsection{Regular Inference Functions}
We recall the definition of regular inference functions used in this appendix 
for the easy of the reader
(see the details in \citeauthor{Jorgensen2012},  \citeyear{Jorgensen2012}, 
Chapter 4, from which we draw heavily).
Consider a parametric family of distributions 
${\cal P} = \{ P_\theta : \theta \in \Theta \subseteq {I\!\! R}^k \}$ 
and a $\sigma$-finite
measure $\mu$ defined on a given measurable space $({\cal X} , {\cal A )}$. For each $P_{\theta}\in \mathcal{P}$, we chose a version of the Radon-Nikodym derivative (with respect to $\mu$), denoted by
\begin{align*}
	p(\cdot;\theta)=\frac{dP_{\theta}}{d\mu}(\cdot).
\end{align*}

\begin{definition}\label{def:regular}
	A function $\Psi \, : \, \mathcal{X}\times \Theta \longrightarrow \R^k$ is said to be a regular inference function when the following conditions are satisfied for all $\theta=(\theta_1,\ldots,\theta_k)\in \Theta$ and for $i,j=1,\ldots,k$.

\begin{enumerate}[label={(\roman*)}]
	\item $\E_{\theta}[\Psi(\theta)]=0$;
	\item The partial derivative $\partial \Psi(x;\theta) / \partial \theta_i$ exists for $\mu$-almost every $x\in\mathcal{X}$;
	\item The order of integration and differentiation may be interchanged as follows:
	\begin{align*}
		\frac{\partial}{\partial \theta_i} \int_{\mathcal{X}} \Psi(x;\theta)p(x;\theta)d\mu(x) 
		= \int_{\mathcal{X}} \frac{\partial}{\partial \theta_i} [\Psi(x;\theta)p(x;\theta)]d\mu(x) \,;
	\end{align*}
\item $\E\{\psi_i(\theta)\psi_j(\theta)\}\in \R$ and the $k\times k$ matrix
\begin{align*}
	V_{\psi}(\theta)=\E\{\Psi(\theta)\Psi^T(\theta)\}
\end{align*}
is positive definite;
\item $\E\{\frac{\partial \psi_i}{\partial \theta_r}(\theta) \frac{\partial \psi_j}{\partial \theta_s}(\theta)\}\in \R $ and the $k\times k$ matrix
\begin{align*}
	S_{\psi}(\theta)=\E\{\nabla_{\theta} \Psi(\theta)\}
\end{align*}
is nonsingular.
\end{enumerate}
Here $\psi_i$ denoted the $i\tth$ component of the vector function
\begin{align*}
	\Psi(\cdot)=\left(\psi_1(\cdot), \ldots,\psi_k(\cdot)\right)^T,
\end{align*}
and $\nabla_{\theta}$ denotes the gradient operator relative to the vector $\theta$, defined by
\begin{align*}
	\nabla_{\theta}f(\theta)=\frac{\partial f}{\partial \theta^T}(\theta).
\end{align*}
\end{definition}

\subsubsection{Some Key Lemmas}
We denote the sequences of  roots of the inference functions  
$\psi^*_{\bm{\beta}}$ and $\psi^*_{\bm{b}}$ by  
$\{\widehat{\bm{\beta}}_n\}_{n\in \mathbb{N}}$ and  
$\{\widehat{\bm{b}}_n\}_{n\in \mathbb{N}}$  
respectively, obtained when the number of observations, $n$, increases.
Moreover, define $\Btheta = (\Bbeta  , \Bb)$ and 
$\widehat{\bm{\theta}}_n =(\widehat{\bm{\beta}}_n  ,\widehat{\bm{b}}_n )$
(for each $n\in\N$). Recall, that the inference function 
$\psi^*: \Omega\times\R^q\times\Y \rightarrow \R^{k+q}$ 
for estimating $\Btheta$ under $\PP^*$, is defined by
\begin{align*}
\psi^*(\Bbeta,\Bb) = 
\left\{\left[\psi^*_{\bm{\beta}}(\Bbeta,\Bb)\right]^T,
         \left[\psi^*_{\bm{b}}(\Bbeta,\Bb)\right]^T
         \right\}^T
        \, ,
\end{align*}
for all $\Bbeta\in\Omega$ and $\Bb\in\R^q$.

\begin{lemma}\label{lem:psiStjerne}
Under the regularity conditions $i$-$iv$, the partial inference functions 
$\psi^*_{\bm{\beta}}$ and $\psi^*_{\Bb}$ are unbiased, that is,
\begin{align*}
		\E_{P^*_{\Bbeta,\Bb,\lambda}}
		\left [\psi^*_{\bm{\beta}}\left(\bm{\beta},\Bb;\BY \right ) \right]
		& = \bm{0}, \\
		\E_{P^*_{\Bbeta,\Bb,\lambda}}
		\left [\psi^*_{\bm{b}}\left(\bm{\beta},\Bb;\BY \right ) \right]
		& = \bm{0},
\end{align*}
for all $\bm{\beta}\in \Omega$, $\Bb \in \R^q $ and $\lambda\in \Lambda$. 
Moreover,  the partial inference functions $\psi^*_{\bm{\beta}}$ and  
$\psi^*_{\bm{b}}$, are regular.
\end{lemma}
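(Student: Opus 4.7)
The plan is to exploit the fact that, up to the nonzero multiplicative scalar $-1/(2\lambda)$, the partial inference functions $\psi^*_{\bm{\beta}}$ and $\psi^*_{\bm{b}}$ coincide with the partial score functions associated with the conditional log-density $\log f^*(\bm{y};\bm{\beta},\bm{b},\lambda) = \sum_{i=1}^n \log a(y_i;\lambda) - \tfrac{1}{2\lambda}\sum_{i=1}^n d(y_i;\mu_i)$. Indeed, differentiating via the chain rule with $\mu_i = g^{-1}(\bm{x}_i^T\bm{\beta}+\bm{z}_i^T\bm{b})$ produces the factor $1/g'(\mu_i)$ and the vectors $\bm{x}_i$ or $\bm{z}_i$ exactly as written in equations \eqref{eq:inf1} and \eqref{eq:inf2}. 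Once this identification is made, unbiasedness reduces to the classical argument that score functions have mean zero under their own model, the interchange of differentiation and integration being legitimised by regularity condition $iv$ (the dominated derivatives of $d(\cdot;g^{-1}(\bm{x}_i^T\bm{\beta}+\bm{z}_i^T\bm{b}))$).

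To establish regularity, I would verify the five conditions of Definition~\ref{def:regular} in turn. Condition $(i)$ is the unbiasedness just shown. Condition $(ii)$, differentiability of each component of the inference function in its parameter, is immediate from the twice continuous differentiability of $g$ (condition $ii$) and of $d$ (condition $iii$), which ensures that $\mu_i \mapsto \partial d(y_i;\mu_i)/\partial\mu_i$ and $\mu_i \mapsto 1/g'(\mu_i)$ are continuously differentiable. Condition $(iii)$, the interchange of integration and differentiation, is a direct consequence of the domination hypothesis $iv$ applied now to the second derivatives (which exist and are continuous, hence locally bounded, allowing a standard dominated convergence argument in a neighbourhood of each $\bm{\beta},\bm{b}$).

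For the sensitivity matrix $S_{\psi^*}(\bm{\beta},\bm{b})=\E\{\nabla\psi^*\}$ and the variability matrix $V_{\psi^*}(\bm{\beta},\bm{b})=\E\{\psi^*\psi^{*T}\}$, I would write them in design-matrix form. A direct calculation using that $\Var(Y_i\mid\bm{B}=\bm{b})=V(\mu_i)/\lambda$ and $\partial^2 d(\mu;\mu)/\partial\mu^2=2/V(\mu)$ yields, for the partial functions, expressions of the shape $\bm{X}^T W \bm{X}$ and $\bm{Z}^T W \bm{Z}$ with $W$ a positive diagonal weight matrix whose entries depend on $V(\mu_i)$ and $g'(\mu_i)$. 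Regularity condition $i$ (full column rank of $\bm{X}$ and $\bm{Z}$) then ensures that both matrices are positive definite and hence nonsingular, verifying $(iv)$ and $(v)$. Finiteness of the relevant moments follows because the second Bartlett identity $S_{\psi^*}=-V_{\psi^*}$ (a consequence of the score interpretation plus condition $(iii)$) reduces the question to finiteness of $V_{\psi^*}$, which in turn holds because the dispersion model has finite conditional variance $V(\mu_i)/\lambda<\infty$ at every $\mu_i\in\U$.

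The main obstacle, and the place where a little care is needed, is condition $(iv)$ for the partial variability matrices. The identification with a weighted Gram matrix of the design makes the argument clean, but one must check that the weights $w_i = 1/\{V(\mu_i)[g'(\mu_i)]^2\}$ are strictly positive and finite at every admissible $(\bm{\beta},\bm{b})$; strict positivity of $V$ on $\U$ (from regularity of the unit deviance) together with invertibility of $g$ (condition $ii$) secures this, so that positive definiteness of $\bm{X}^T W\bm{X}$ and $\bm{Z}^T W\bm{Z}$ follows from the full-rank hypothesis on the design matrices.
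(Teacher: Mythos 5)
Your proof is correct and follows essentially the same route as the paper: you identify $\psi^*_{\bm{\beta}}$ and $\psi^*_{\bm{b}}$ as $-2\lambda$ times the partial scores of $\log f^*$, so that unbiasedness reduces to interchanging differentiation and integration (licensed by condition $iv$) and differentiating $\int f^*\,d\bm{\nu}=1$. You are in fact more explicit than the paper about the remaining regularity conditions (the paper dismisses them as following "straightforwardly"), and your weighted-Gram-matrix argument for nonsingularity of $S_{\psi^*}$ and $V_{\psi^*}$ via the full-rank assumption on $\bm{X}$ and $\bm{Z}$ is a sound way to fill that in; the only nit is that for the scaled function the second Bartlett identity acquires a factor of $2\lambda$, which does not affect the conclusion.
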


\begin{proof}
We show that $\psi^*_{\bm{\beta}}$ is unbiased since the unbiasedness of  
$\psi^*_{\Bb}$ follows from the same arguments. 
Take arbitrarily $\bm{\beta}\in \Omega$, $\Bb \in \R^q $ and $\lambda\in 
\Lambda$.
We aim to show that
\begin{align*}
0 = \int_{\mathcal{Y}} 
\psi^*_{\bm{\beta}}(\bm{\beta},\Bb;\bm{y})f^*(\bm{y};\bm{\beta},\Bb,\lambda)d\bm{\nu}(\bm{y}).
\end{align*} 
The regularity conditions ensure that it is allowed to interchange the order of 
differentiation and integration  in the following:
\begin{align*}
\int_{\mathcal{Y}} 
\psi^*_{\bm{\beta}}(\bm{\beta},\Bb;\bm{y})&f^*(\bm{y};\bm{\beta},\Bb,\lambda)d\bm{\nu}(\bm{y})
 \\
&=\sum_{i=1}^n  \int_{\mathcal{Y}}	 \tfrac{\partial}{\partial 
\bm{\beta}}\{d(y_i;g^{-1}(\bm{x}_i^T\bm{\beta}+\tilde{\bm{z}}_i^T\Bb))\} 
f^*(y_i;\bm{\beta},\Bb,\lambda) d\nu(y_i) \\
		&=-2\lambda\sum_{i=1}^n  \int_{\mathcal{Y}} \frac{\partial}{\partial 
		\bm{\beta}}  f^*(y_i;\bm{\beta},\Bb,\lambda) d\nu(y_i) \\
		&=-2\lambda\sum_{i=1}^n  \frac{\partial}{\partial \bm{\beta}} 
		\int_{\mathcal{Y}}  f^*(y_i;\bm{\beta},\Bb,\lambda) d\nu(y_i) 
	={0}.
\end{align*}
The proof follows since $\bm{\beta}\in \Omega$, $\Bb \in \R^q $ and $\lambda\in 
\Lambda$ are arbitrarily chosen.

The other regularity conditions for the inference functions follow straightforwardly from the assumed regularity conditions i-iv for the GLMM in play.
\end{proof}

We introduce some required notation before presenting the next lemma.
Define the sensitivity block matrices
\begin{align*}
		\begin{array}{cc}
		S_{\Bbeta\Bb} \, \, = \E[\nabla_{ \Bb} 
		\psi_{\bm{\beta}}^*(\Bbeta,\Bb;\bm{Y})], &
		S_{\Bb\Bbeta} = \E[\nabla_{ \Bbeta} 
		\psi_{\bm{b}}^*(\Bbeta,\Bb;\bm{Y})],   \\
		S_{\Bb\Bb}  = \E[\nabla_{ \Bb} 
		\psi_{\bm{b}}^*(\Bbeta,\Bb;\bm{Y})], &
		S_{\Bbeta\Bbeta} \, \,= \E[\nabla_{ \Bbeta} 
		\psi_{\bm{\beta}}^*(\Bbeta,\Bb;\bm{Y})] 	,
	\end{array}
\end{align*}
and the variability matrices
\begin{align*}
	\begin{array}{cc}
	V_{\Bbeta\Bb^*} 
\,	
=\E[\psi_{\bm{b}}^*(\Bbeta,\Bb;\bm{Y})\psi_{\bm{\beta}}^*(\Bbeta,\Bb;\bm{Y})^T],
	 &
V_{\Bb^*\Bbeta} 
=\E[\psi_{\bm{\beta}}^*(\Bbeta,\Bb;\bm{Y})\psi_{\bm{b}}^*(\Bbeta,\Bb;\bm{Y})^T],
  \\
V_{\Bb\Bb} 
=\E[\psi_{\bm{b}}^*(\Bbeta,\Bb;\bm{Y})\psi_{\bm{b}}^*(\Bbeta,\Bb;\bm{Y})^T],
&
\,\, V_{\Bbeta\Bbeta} 
\,\,=\E[\psi_{\bm{\beta}}^*(\Bbeta,\Bb;\bm{Y})\psi_{\bm{\beta}}^*(\Bbeta,\Bb;\bm{Y})^T]
 .	
	\end{array}
\end{align*}
Using these, we define
\begin{align*}
	\begin{array}{lll}
		W=D^{-1}= 	S_{\bm{b}\bm{b}}-	
		S_{\bm{\beta}\bm{b}}S_{\bm{\beta}\bm{\beta}}^{-1}S_{\bm{b}\bm{\beta}},
		 &&
		A = 
		S_{\bm{\beta}\bm{\beta}}^{-1}+S_{\bm{\beta}\bm{\beta}}^{-1}
		S_{\bm{b}\bm{\beta}}W^{-1}S_{\bm{\beta}\bm{b}}
			S_{\bm{\beta}\bm{\beta}}^{-1},\\
		E\,\, = -S_{\bm{\beta}\bm{\beta}}^{-1}S_{\bm{b}\bm{\beta}}W^{-1}, 
		&&
		C = -W^{-1}S_{\bm{\beta}\bm{b}}	S_{\bm{\beta}\bm{\beta}}^{-1}.
	\end{array}
\end{align*}
\begin{lemma}\label{appLem:Godambde}
	The inverse Godambde information for the inference function $\psi^*$ is the 
	matrix-valued function $J^{-1}_{\psi^*}: \Omega \times \R^q \rightarrow 
	\R^{(k+q)\times(k+q)}$ defined by 
	\begin{align*}
		J^{-1}_{\psi^*}=
		\begin{bmatrix}
			J_{\psi^*_{\bm{\beta}}}^{-1} & (J_{\psi^*_{\Bbeta 
			\bm{b}}}^{-1})^T \\
			 J_{\psi^*_{\Bbeta \bm{b}}}^{-1} & J_{\psi^*_{\bm{b}}}^{-1}
		\end{bmatrix},
	\end{align*}
with
	\begin{align*}
	J_{\psi^*_{\bm{\beta}}}^{-1} &= A V_{\bm{\beta}\bm{\beta}} 
	A^T+EV_{\bm{\beta}\bm{b}}A^T 
	+AV_{\bm{b}^*\bm{\beta}}E^T+EV_{\bm{b}\bm{b}}E^T\\
	J_{\psi^*_{\Bbeta \bm{b}}}^{-1}  &= C V_{\bm{\beta}\bm{\beta}} 
	A^T+DV_{\bm{\beta}\bm{b}}A^T 
	+CV_{\bm{b}\bm{\beta}}E^T+DV_{\bm{b}\bm{b}}E^T\\
	J_{\psi^*_{\bm{b}}}^{-1} &= C V_{\bm{\beta}\bm{\beta}} 
	C^T+DV_{\bm{\beta}\bm{b}^*}C^T 
	+CV_{\bm{b}\bm{\beta}}D^T+DV_{\bm{b}\bm{b}}D^T.
\end{align*}
for all $\bm{\beta}\in \Omega$ and $\Bb \in \R^q $ using the above introduced 
notation.
\end{lemma}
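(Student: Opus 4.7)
The plan is to recognise Lemma \ref{appLem:Godambde} as a direct application of the classical sandwich formula for the inverse Godambe information of a stacked inference function, combined with block matrix inversion via Schur complements. By Lemma \ref{lem:psiStjerne}, the partial inference functions $\psi^*_{\bm{\beta}}$ and $\psi^*_{\bm{b}}$ are unbiased and regular; stacking them yields a regular inference function $\psi^*$, so that the sandwich identity
\begin{align*}
	J^{-1}_{\psi^*} \;=\; S^{-1} \, V \, (S^{-1})^T
\end{align*}
applies, where $S$ and $V$ are, respectively, the full sensitivity and variability matrices of $\psi^*$.

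First I would assemble the block structure of $S$ and $V$. Since $\psi^*$ is the vertical concatenation of $\psi^*_{\bm{\beta}}$ and $\psi^*_{\bm{b}}$, direct differentiation and expectation yield
\begin{align*}
	S = \begin{bmatrix} S_{\bm{\beta}\bm{\beta}} & S_{\bm{\beta}\bm{b}} \\ S_{\bm{b}\bm{\beta}} & S_{\bm{b}\bm{b}} \end{bmatrix}, \qquad V = \begin{bmatrix} V_{\bm{\beta}\bm{\beta}} & V_{\bm{\beta}\bm{b}} \\ V_{\bm{b}\bm{\beta}} & V_{\bm{b}\bm{b}} \end{bmatrix},
\end{align*}
using the notation introduced just before the statement of the lemma.

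Next I would invert $S$ using the Schur complement. Regularity of $\psi^*$ forces $S$ to be nonsingular, and the nonsingularity of $S_{\bm{\beta}\bm{\beta}}$ (which is itself the sensitivity of a regular inference function for $\bm{\beta}$ with $\bm{b}$ held fixed) ensures that the Schur complement $W$ defined in the statement is invertible. The standard block-inversion identity then produces
\begin{align*}
	S^{-1} = \begin{bmatrix} A & E \\ C & D \end{bmatrix},
\end{align*}
with $A$, $E$, $C$, and $D = W^{-1}$ exactly the matrices given in the lemma.

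Finally, I would carry out the $2\times 2$ block product $S^{-1} V (S^{-1})^T$. The top-left block reads $A V_{\bm{\beta}\bm{\beta}} A^T + E V_{\bm{b}\bm{\beta}} A^T + A V_{\bm{\beta}\bm{b}} E^T + E V_{\bm{b}\bm{b}} E^T$, which is the claimed expression for $J_{\psi^*_{\bm{\beta}}}^{-1}$; the bottom-right and off-diagonal blocks follow by the same routine expansion, with the symmetry $J^{-1}_{\psi^*} = (J^{-1}_{\psi^*})^T$ (inherited from symmetry of $V$) explaining why the lemma displays only $J_{\psi^*_{\Bbeta\bm{b}}}^{-1}$ together with its transpose in the off-diagonal positions. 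Mathematically there is no deep obstacle here; the proof is essentially bookkeeping. The two points requiring care are (i) justifying invertibility of the Schur complement from the regularity hypothesis, and (ii) keeping the notational conventions for the cross-variability terms consistent so that the pieces match block by block with the formulas stated in the lemma.
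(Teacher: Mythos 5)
Your proposal is correct and follows essentially the same route as the paper, which simply invokes the sandwich formula $J^{-1}_{\psi^*}=S^{-1}V(S^{-1})^{T}$ from Chapter 4 of J{\o}rgensen and Labouriau together with block matrix inversion; you merely spell out the Schur-complement inversion of $S$ and the $2\times 2$ block multiplication that the paper leaves implicit. Your caution about the cross-variability notation is well placed, since the paper's definitions of $V_{\Bbeta\Bb^*}$ and $V_{\Bb^*\Bbeta}$ swap the usual ordering, and your expansion is consistent with the lemma once that convention is taken into account.
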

\begin{proof}
	The result follows from the formulas in Chapter 4 in \cite{Jorgensen2012} and 
	inversion of block matrices.
\end{proof}
\begin{lemma}\label{appLem:asympNorm}
Assume the regularity conditions $i$-$iv$.
Then, for all $\bm{\beta}\in \Omega$, $\Bb \in \R^q $ and $\lambda\in \Lambda$, 
it is true that 
\begin{align*}
	\widehat{\bm{\beta}}_n  	\xrightarrow[n\to 
	\infty]{{{P}^*_{\Bbeta,\Bb,\lambda}}} \Bbeta
	\mbox{ and }
	\widehat{\bm{b}}_n   	\xrightarrow[n\to 
	\infty]{{{P}^*_{\Bbeta,\Bb,\lambda}}}  \Bb
	\, .
\end{align*}
Moreover,
\begin{align*}
	\sqrt{n}(\widehat{\bm{\theta}}_n - \bm{\theta})\vert \bm{B}=\Bb
	\xrightarrow[n\to \infty]{\mathcal{D}} 
	N_{k+q}(\bm{0},J_{\psi^*}^{-1}(\Bbeta,\Bb)),
\end{align*}
implying that
\begin{align*}
	\sqrt{n}(\widehat{\bm{\beta}}_n - \Bbeta)\vert \bm{B}=\Bb 
	\xrightarrow[n\to 
	\infty]{\mathcal{D}} 
	N_k(\bm{0},J_{\psi^*_{\bm{\beta}}}^{-1}(\bm{\beta},\Bb))
\end{align*}
and
\begin{align*}
		\sqrt{n}(\widehat{\bm{b}}_n - \Bb)\vert \bm{B}=\Bb 
		\xrightarrow[n\to \infty]{\mathcal{D}} 
		N_q(\bm{0},J_{\psi^*_{\bm{b}}}^{-1}(\bm{\beta},\Bb))
	\, .
\end{align*}

\end{lemma}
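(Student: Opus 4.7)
The plan is to apply the classical asymptotic theory of regular inference functions (Jorgensen 2012, Chapter 4) to the joint inference function $\psi^* = ((\psi^*_{\Bbeta})^T, (\psi^*_{\Bb})^T)^T$. The essential regularity is already in hand: Lemma \ref{lem:psiStjerne} establishes that $\psi^*_{\Bbeta}$ and $\psi^*_{\Bb}$ are unbiased regular inference functions under $\PP^*$, and under $P^*_{\Bbeta, \Bb, \lambda}$ the observations $Y_1, \ldots, Y_n$ are independent with linear predictors evaluated at the fixed $\Bb$, so that each summand of $\psi^*$ is independent with mean zero.

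For consistency, I would first argue that $\Btheta = (\Bbeta, \Bb)$ is the unique root of the population mean inference function $\Btheta' \mapsto \mathbb{E}_{P^*_{\Btheta,\lambda}}[\psi^*(\Btheta'; \BY)]$. That the true parameter is a root is immediate from unbiasedness; uniqueness follows from the identifiable parametrisation described after Lemma \ref{LemmaA01}, combined with the full-rank conditions in regularity condition (i) and the positivity of the variance function. A uniform law of large numbers applied to $n^{-1}\psi^*$, justified by regularity conditions (ii)--(iv), then yields the standard M-estimator consistency, \ie $\widehat{\Bbeta}_n \to \Bbeta$ and $\widehat{\Bb}_n \to \Bb$ in probability under $P^*_{\Bbeta,\Bb,\lambda}$.

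For asymptotic normality, I would Taylor-expand $\psi^*$ about the true $\Btheta$, obtaining
\begin{align*}
\sqrt{n}(\widehat{\Btheta}_n - \Btheta) = -\left[\tfrac{1}{n}\nabla_{\Btheta} \psi^*(\Btheta^*_n; \BY)\right]^{-1} \cdot \tfrac{1}{\sqrt{n}} \psi^*(\Btheta; \BY),
\end{align*}
for some $\Btheta^*_n$ on the segment between $\widehat{\Btheta}_n$ and $\Btheta$. Consistency combined with a uniform law of large numbers identifies the limit of the bracketed matrix as the sensitivity $S_{\psi^*}(\Btheta)$, which is nonsingular by the regularity of $\psi^*$, while the multivariate central limit theorem applied to $n^{-1/2}\psi^*(\Btheta; \BY)$ (a normalised sum of independent mean-zero vectors with finite variance by Lemma \ref{lem:psiStjerne}) gives a $N_{k+q}(\bm{0}, V_{\psi^*}(\Btheta))$ limit. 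Slutsky's theorem then yields the joint normality statement with sandwich covariance $S_{\psi^*}^{-1} V_{\psi^*} S_{\psi^*}^{-T}$, which agrees with $J_{\psi^*}^{-1}(\Bbeta,\Bb)$ by Lemma \ref{appLem:Godambde}. The two marginal statements follow by extracting the corresponding diagonal blocks.

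The main obstacle will be controlling the sensitivity matrix for the $\Bb$ block, since the $j\tth$ entry of $\psi^*_{\Bb}$ involves only the observations in cluster $j$. Because $q$ is treated as fixed while $n\to\infty$, the relevant within-cluster averages scale at rate $n/q$, and one must argue carefully using the structure encoded in $\bm{Z}$ and the full-rank assumption in (i) that every diagonal block of $n^{-1}S_{\Bb\Bb}$ remains bounded away from zero in the limit, and that the within-cluster CLTs combine into a single joint CLT for $n^{-1/2}\psi^*$. Once this point is handled, the remaining steps are routine applications of Slutsky and the continuous mapping theorem.
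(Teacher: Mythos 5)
Your proposal is correct and takes essentially the same route as the paper, which simply cites the classical asymptotic theory of regular inference functions (J{\o}rgensen 2012, Chapter 4) together with the regularity established in Lemma~\ref{lem:psiStjerne}; you have merely unpacked the standard consistency--Taylor expansion--CLT--Slutsky argument and the sandwich form $S_{\psi^*}^{-1}V_{\psi^*}S_{\psi^*}^{-T}$ that this citation stands for. Your closing remark about the growth rate of the cluster-wise sums in the $\Bb$-block is a legitimate point of care that the paper's one-line proof passes over silently, but it does not change the method.
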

\begin{proof}
	The proof follows from the results in Chapter 4 in 
	\cite{Jorgensen2012}, and the fact that  $\psi^*_{\bm{\beta}}$ and 
	$\psi^*_{\bm{b}}$ are regular inference functions as a consequence of 
	Lemma~\ref{lem:psiStjerne}.
\end{proof}

\subsubsection{On the asymptotic variance of \texorpdfstring{$\widehat{\bm{\theta}}_n$}{$\widehat{\theta}_n$} under the family $\PP$}

\begin{lemma}\label{Lemma:ConsistP}
Assume the regularity conditions $i$-$iv$.
The partial solution $\{\widehat{\bm{\beta}}_n\}_{n\in \mathbb{N}}$ of 
$\psi^*_{\bm{\beta}}$ is also a solution to $\psi_{\bm{\beta}}=0$ defined in 
\eqref{eq:infFunPhi},  and the unconditionally asymptotic covariance matrices 
(for $n$ 
converging to 
infinity and $q$ fixed), denoted AV, 
of $\widehat{\bm{\beta}}_n$ and  $\widehat{\bm{b}}_n$ are given by
\begin{align}\label{eqA5-I}
	\AV (\widehat{\bm{\beta}}_n)&=	
	\E[J_{\psi^*_{\bm{\beta}}}^{-1}(\Bbeta,\bm{B})]+
	\Var [\widehat{\bm{\beta}}_n(\bm{B})] ,\\ \label{eqA5-II}
	\AV (\widehat{\bm{b}}_n)&=	
	\E[J_{\psi^*_{\bm{b}}}^{-1}(\Bbeta,\bm{B})]+
	\bm{I}_q \sigma^2  ,
\end{align}
with $\widehat{\bm{\beta}}_n(\bm{B})$ denoting the estimator of $\Bbeta$ as a 
function of $\bm{B}$ for all $n\in \mathbb{N}$, $\Bbeta\in\Omega$ and 
$\bm{B}\in \R^q$.  
Moreover, 	
\begin{align*}
	\widehat{\bm{\beta}}_n  	\xrightarrow[n\to 
	\infty]{{{P}_{\Bbeta,\lambda,\sigma^2}}} \Bbeta,
\end{align*}
for all $\bm{\beta}\in \Omega$, $\lambda\in \Lambda$ and $\sigma^2\in\R_+$.
\end{lemma}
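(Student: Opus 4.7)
The plan is to handle the three claims in order, starting with the identification of the roots, then using the law of total variance together with Lemma~\ref{appLem:asympNorm} to derive the two covariance decompositions, and finally transferring consistency from $\PP^*$ to $\PP$ via a dominated-convergence argument over $\bm{B}$.

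For the first claim, I would simply unfold the definition \eqref{eq:infFunPhi}. By construction $\widehat{\bm{\beta}}_n$ and $\widehat{\Bb}_n$ arise as a joint root of the system $\psi^*_{\bm{\beta}}(\Bbeta,\Bb;\bm{y})=0$ and $\psi^*_{\Bb}(\Bbeta,\Bb;\bm{y})=0$, and $\psi_{\bm{\beta}}(\Bbeta;\bm{y})\deffeq \psi^*_{\bm{\beta}}(\Bbeta,\widehat{\Bb};\bm{y})$. Thus $\psi_{\bm{\beta}}(\widehat{\bm{\beta}}_n;\bm{y})=\psi^*_{\bm{\beta}}(\widehat{\bm{\beta}}_n,\widehat{\Bb}_n;\bm{y})=\bm{0}$, which is the desired identification.

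For the covariance formulas \eqref{eqA5-I} and \eqref{eqA5-II}, I would apply the law of total variance under $P_{\Bbeta,\lambda,\sigma^2}$, conditioning on $\bm{B}$. Observing that under $\PP$ the conditional distribution of the data given $\bm{B}=\Bb$ is exactly $P^*_{\Bbeta,\Bb,\lambda}$, Lemma~\ref{appLem:asympNorm} supplies the conditional asymptotic mean and variance of $\widehat{\bm{\beta}}_n$ and $\widehat{\bm{b}}_n$ given $\bm{B}$. Writing $\widehat{\bm{\beta}}_n(\bm{B})$ for the (limiting) conditional mean map $\Bb\mapsto \E_{P^*_{\Bbeta,\Bb,\lambda}}[\widehat{\bm{\beta}}_n]$, the decomposition
\begin{align*}
\Var(\widehat{\bm{\beta}}_n)=\E\bigl[\Var(\widehat{\bm{\beta}}_n\mid\bm{B})\bigr]+\Var\bigl[\E(\widehat{\bm{\beta}}_n\mid\bm{B})\bigr]
\end{align*}
gives \eqref{eqA5-I} immediately. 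For \eqref{eqA5-II} the same decomposition yields the expected conditional information term, while the conditional mean of $\widehat{\bm{b}}_n$ given $\bm{B}$ converges to $\bm{B}$ itself by the consistency part of Lemma~\ref{appLem:asympNorm}, so its unconditional variance is $\Var(\bm{B})=\sigma^2\bm{I}_q$ because the components $B_1,\ldots,B_q$ are independent with common variance $\sigma^2$.

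For the final consistency statement, I would fix $\epsilon>0$ and write
\begin{align*}
P_{\Bbeta,\lambda,\sigma^2}\bigl(\|\widehat{\bm{\beta}}_n-\Bbeta\|>\epsilon\bigr)
=\int_{\R^q}P^*_{\Bbeta,\Bb,\lambda}\bigl(\|\widehat{\bm{\beta}}_n-\Bbeta\|>\epsilon\bigr)\prod_{j=1}^q\varphi(b_j;\sigma^2)\,d\Bb.
\end{align*}
For every $\Bb\in\R^q$ the integrand vanishes as $n\to\infty$ by the first convergence statement of Lemma~\ref{appLem:asympNorm}, and it is dominated by the integrable function $\prod_j \varphi(b_j;\sigma^2)$. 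Dominated convergence then yields the conclusion. The main technical obstacle is the passage from conditional to unconditional statements in part 2, specifically justifying that the asymptotic variance of $\widehat{\bm{\beta}}_n$ under $\PP$ is the sum of expectations and variances of the conditional asymptotics given by Lemma~\ref{appLem:asympNorm}; this requires uniform integrability of the rescaled squared estimation errors, which follows from the regularity conditions (i)--(iv) together with the boundedness arguments used to prove Lemma~\ref{appLem:asympNorm}.
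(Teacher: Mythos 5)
Your proposal is correct and follows essentially the same route as the paper: the root identification by unfolding the definition of $\psi_{\bm{\beta}}$, the law of total variance combined with Lemma~\ref{appLem:asympNorm} for the two covariance decompositions, and the conditioning-on-$\bm{B}$ integral plus interchange of limit and integration for the unconditional consistency. If anything, your version is more explicit than the paper's (which states the variance decomposition only in one line), and your remark that passing from conditional asymptotic variances to the unconditional one requires a uniform-integrability justification is a fair observation about a step the paper leaves implicit.
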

\begin{proof}
If $\widehat{\bm{\beta}}_n$ is a solution to \eqref{eq:infFunPhi} then it is also a 
solution to \eqref{eq:inf1} when inserting  $\widehat{\bm{\beta}}_n$ and  
$\widehat{\bm{b}}_n$ for a given $n\in\mathbb{N}$.

Take  $\bm{\beta}\in \Omega$, $\lambda\in \Lambda$ and $\sigma^2\in\R_+ $ 
arbitrarily. The asymptotic covariance matrices follows from the law of total 
variance and Lemma~\ref{appLem:asympNorm}, which also implies that for all 
$\epsilon>0$
\begin{align*}
	P_{\Bbeta,\lambda,\sigma^2}(\vert \widehat{\bm{\beta}}_n - \bm{\beta}\vert > \epsilon)&= 
	\!\! \!\int_{\R^q} \!\!\!\!\! P_{\Bbeta,\Bb,\lambda}^*(\vert 
	\widehat{\bm{\beta}}_n - \bm{\beta}\vert > \epsilon \, \big\vert \, 
	\bm{B}=\bm{b}) \prod_{j=1}^q \varphi({b}_j;\sigma^2 )d\bm{b}
	\underset{n \to \infty}{\longrightarrow} 0,
\end{align*}
since $P_{\Bbeta,\Bb,\lambda}^*(\vert \widehat{\bm{\beta}}_n - \bm{\beta}\vert 
> \epsilon \, \big\vert \, \bm{B}=\bm{b})\underset{n\to 
\infty}{\longrightarrow} 0$  for all $\epsilon>0$ and $\Bb \in \R^q$. By 
the regularity 
assumptions i-iv, we can interchange the order of limit and integration. The proof 
follows since $\bm{\beta}\in \Omega$, $\lambda\in \Lambda$ and 
$\sigma^2\in\R_+$ are arbitrarily chosen.
\end{proof}

Often the distribution of the random components can be easily simulated in a computational efficient way ({\it e.g.}, when the random components are normally or t- distributed). 
In those  cases, the expectations and variances referred in \eqref{eqA5-I} and 
\eqref{eqA5-II} can be easily obtained using Monte Carlo methods (this includes 
simulations of $\bm{B}$ and calculations of estimates of $\bm{\beta}$ as a 
function of the simulated values). 

\subsubsection{Proof of the Theorem \ref{theor:AsymptNormality}}\label{App:proof}

The lemma below provides the calculation of the characteristic function of the 
asymptotic distribution of the sequence of estimated values of 
$\widehat{\bm{\beta}}_n$ and $\widehat{\bm{b}}_n$, which will be crucial to 
prove Theorem \ref{theor:AsymptNormality}.

\begin{lemma}\label{lemma:charFun}
Assume the regularity conditions $i$-$iv$.
There exist two random vectors $\bm{Z}_{\Bbeta}$ and $\bm{Z}_{\Bb}$ with 
characteristic functions
\begin{align*}
\E[\exp(i\bm{t_1}^T\bm{Z}_{\Bbeta})]& =
\E[	\exp(-\tfrac{1}{2}\bm{t}_1^T 
J_{\psi^*_{\bm{\beta}}}^{-1}(\Bbeta,\bm{B})\bm{t}_1 )], 
\mbox{ for all } \bm{t}_1\in \R^k ,
\\
\E[\exp(i\bm{t}_2^T\bm{Z}_{\Bb})] &=
\E[	\exp(-\tfrac{1}{2}\bm{t}_2^T 
J_{\psi^*_{\bm{b}}}^{-1}(\Bbeta,\bm{B})\bm{t}_2 )],
\mbox{ for all } \bm{t}_2\in \R^q ,
\end{align*}
respectively, such that 
\begin{align*}
     \sqrt{n} (\widehat{\bm{\beta}}_n-\Bbeta)    	\xrightarrow[n\to \infty]{\mathcal{D}}  \bm{Z}_{\Bbeta} 
      \mbox{ and }
    \sqrt{n}(\widehat{\bm{b}}_n -\Bb)    	\xrightarrow[n\to 
    \infty]{\mathcal{D}}  \bm{Z}_{\Bb}
     \, .
\end{align*}

\end{lemma}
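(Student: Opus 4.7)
The plan is to obtain the unconditional limiting characteristic function from the conditional normality established in Lemma~\ref{appLem:asympNorm} by integrating over the distribution of $\bm{B}$ under $\PP$, and then to identify the resulting function as a legitimate characteristic function of some random vector via a mixture representation.

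Concretely, I would first fix $\bm{t}_1\in\R^k$ and, invoking Lemma~\ref{appLem:asympNorm}, note that conditionally on $\bm{B}=\bm{b}$ the sequence $\sqrt{n}(\widehat{\bm{\beta}}_n-\Bbeta)$ converges in distribution to $N_k(\bm{0},J_{\psi^*_{\bm{\beta}}}^{-1}(\Bbeta,\bm{b}))$. By the continuity theorem, the conditional characteristic function satisfies
\begin{align*}
\E\bigl[\exp(i\bm{t}_1^T\sqrt{n}(\widehat{\bm{\beta}}_n-\Bbeta))\,\big|\,\bm{B}=\bm{b}\bigr]
\xrightarrow[n\to\infty]{}
\exp\!\bigl(-\tfrac{1}{2}\bm{t}_1^T J_{\psi^*_{\bm{\beta}}}^{-1}(\Bbeta,\bm{b})\bm{t}_1\bigr),
\end{align*}
for each $\bm{b}\in\R^q$. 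The unconditional characteristic function under $P_{\Bbeta,\lambda,\sigma^2}$ is then the integral of the conditional characteristic function against the density $\prod_{j=1}^q\varphi(b_j;\sigma^2)$ of $\bm{B}$.

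Next I would justify passing the limit inside the integral by dominated convergence: the conditional characteristic function is uniformly bounded by $1$ in modulus for every $n$ and every $\bm{b}$, and the mixing density is integrable, so the bounded convergence theorem applies. This yields
\begin{align*}
\E\bigl[\exp(i\bm{t}_1^T\sqrt{n}(\widehat{\bm{\beta}}_n-\Bbeta))\bigr]
\xrightarrow[n\to\infty]{}
\E\bigl[\exp\!\bigl(-\tfrac{1}{2}\bm{t}_1^T J_{\psi^*_{\bm{\beta}}}^{-1}(\Bbeta,\bm{B})\bm{t}_1\bigr)\bigr].
\end{align*}
The same argument applied componentwise gives the analogous limit for $\sqrt{n}(\widehat{\bm{b}}_n-\Bb)$.

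To finish, I would identify the limiting function as the characteristic function of an actual random vector. The natural candidate is a scale mixture of normals: let $\bm{B}$ have its prescribed distribution, and conditionally on $\bm{B}$ let $\bm{Z}_{\Bbeta}\sim N_k(\bm{0},J_{\psi^*_{\bm{\beta}}}^{-1}(\Bbeta,\bm{B}))$, with $\bm{Z}_{\Bb}$ defined analogously. By construction $\bm{Z}_{\Bbeta}$ and $\bm{Z}_{\Bb}$ have precisely the characteristic functions stated in the lemma, and Lévy's continuity theorem converts the pointwise convergence of characteristic functions into convergence in distribution. The main obstacle I anticipate is verifying the measurability and integrability needed for the dominated convergence step when $J_{\psi^*_{\bm{\beta}}}^{-1}(\Bbeta,\bm{b})$ is an unbounded function of $\bm{b}$; the regularity conditions $i$--$iv$ together with the assumed finite fourth moments of $\varphi$ should suffice, since we only need boundedness of the complex exponential, not of the quadratic form itself.
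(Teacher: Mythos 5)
Your proposal is correct and follows essentially the same route as the paper: both condition on $\bm{B}$, invoke Lemma~\ref{appLem:asympNorm} for the conditional Gaussian limit, pass the limit through the integral over the distribution of $\bm{B}$, and identify the limiting law as the normal variance mixture $\bm{Z}_{\Bbeta}\,\vert\,\bm{B}\sim N_k(\bm{0},J_{\psi^*_{\bm{\beta}}}^{-1}(\Bbeta,\bm{B}))$. The only (minor) difference is that you work directly with characteristic functions and L\'evy's continuity theorem, where the uniform bound $\vert\exp(i\bm{t}^T\bm{x})\vert\le 1$ makes the interchange of limit and integration immediate by bounded convergence, whereas the paper uses the Portmanteau theorem with general bounded continuous test functions and appeals to the regularity conditions for the same interchange.
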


\begin{proof}
	By Lemma~\ref{appLem:asympNorm} we have that 
	\begin{align*}
		\sqrt{n}(\widehat{\bm{\beta}}_n -\Bbeta) \vert \bm{B}=\bm{b} 
		\overset{\mathcal{D}}{\underset{n\to \infty}{\longrightarrow}} 
		\N_k\big((\bm{0},J_{\psi^*_{\Bbeta}}^{-1}(\bm{\beta},\bm{b})\big).
	\end{align*} 
	Let $\bm{Z}_{\Bbeta}$ denote a random variable distributed according to the 
	above defined conditional asymptotically Gaussian distribution. By the 
	Portmanteau theorem the above is equivalent to
	\begin{align*}
		\E\Big[h\big(\sqrt{n}(\widehat{\bm{\beta}}_n -\Bbeta)\big)\vert 
		\bm{B}=\bm{b} \Big]  \underset{n\to\infty}{\longrightarrow} 
		\E\big[h(\bm{Z}_{\Bbeta})\vert \bm{B}=\bm{b} \big] 
	\end{align*}
	for all continuous bounded functions $h$.
	Thus, we have that
	\begin{align*}
		\E\Big[h\big(\sqrt{n}(\widehat{\bm{\beta}}_n 
		-\Bbeta)\big)\Big]&=\int_{\R^q} 
		\E\Big[h\big(\sqrt{n}(\widehat{\bm{\beta}}_n -\Bbeta)\big)\vert 
		\bm{B}=\bm{b} \Big]   
		\prod_{j=1}^q \varphi({b}_j;\sigma^2 )
		d\bm{b}
		\underset{n\to\infty}{\longrightarrow} \\
		& \int_{\R^q} \E[h(\bm{Z}_{\Bbeta})\vert \bm{B}=\bm{b} 
		]   
		\prod_{j=1}^q \varphi({b}_j;\sigma^2 )
	d\bm{b} \\
		&= \E[h(\bm{Z}_{\Bbeta})],
	\end{align*} 
	since we can interchange the order of limit and integration due to the assumed 
	regularity conditions. Therefore, we conclude that 
	\begin{align*}
		\sqrt{n}(\widehat{\bm{\beta}}_n -\Bbeta) \overset{\mathcal{D}}{\underset{n\to \infty}{\longrightarrow}} \bm{Z}_{\Bbeta}.
	\end{align*} 
	
	The characteristic  function of $\bm{Z}_{\Bbeta}$ is given by:
	\begin{align*}
		\E[\exp(i\bm{t}_1^T\bm{Z}_{\Bbeta})] &=  
		\E[\E[\exp(i\bm{t}_1^T\bm{Z}_{\Bbeta})\vert \bm{B} ]] \\
		&=\E[	\exp(-\tfrac{1}{2}\bm{t}_1^T 
		J^{-1}_{\psi^*_{\Bbeta}}(\bm{\beta},\bm{B})\bm{t}_1 )],
		\mbox{ for all } \bm{t}_1\in \R^k  .\\
	\end{align*}
The proof for $\widehat{\bm{b}}_n$ follows by similar arguments by changing 
$\widehat{\bm{\beta}}_n$ to $\widehat{\bm{b}}_n$, and $\bm{Z}_{\Bbeta}$ 
to 
$\bm{Z}_{\Bb}$ (by changing 
$J^{-1}_{\psi^*_{\bm{\beta}}}(\bm{\beta},\bm{B})$ to 
$J^{-1}_{\psi^*_{\bm{b}}}(\bm{\beta},\bm{B})$) in the above.
\end{proof}

The theorem below corresponds to the second part of theorem \ref{theor:AsymptNormality}.
\begin{theorem} \label{prop:AsymConvApp}
Under the regularity conditions $i$-$iv$, the sequences  
$\{\hat{\bm{\beta}}_n\}_{n\in \mathbb{N}}$ and  
$\{\widehat{\bm{b}}_n\}_{n\in \mathbb{N}}$ are asymptotically Gaussian 
distributed,
when $n\to \infty$ and  $\sigma^2 \to 0+$
in the following way
\begin{align*}
		\sqrt{n}(\widehat{\bm{\beta}}_n-\Bbeta)    	\xrightarrow[\substack{n\to \infty\\ \sigma^2 \to 0+} ]{\mathcal{D}}  N_k(\bm{0},J^{-1}_{\psi^*_{\bm{\beta}}}(\bm{\beta},\bm{0})),
\end{align*}
	and
\begin{align*}
		\sqrt{n}(\widehat{\bm{b}}_n - \Bb    )	 	\xrightarrow[\substack{n\to 
		\infty\\ \sigma^2 \to 0+} ]{\mathcal{D}}  
		N_q(\bm{0},J^{-1}_{\psi^*_{\bm{b}}}(\bm{\beta},\bm{0}))
		\, .
\end{align*}
\end{theorem}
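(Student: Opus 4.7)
My plan is to build directly on Lemma~\ref{lemma:charFun}, which already performs the hard $n \to \infty$ step and expresses the characteristic functions of the asymptotic limits $\bm{Z}_{\Bbeta}$ and $\bm{Z}_{\Bb}$ as expectations, over the distribution of $\bm{B}$, of Gaussian characteristic functions whose covariances depend on $\bm{B}$. Thus the only remaining task is to analyse those expectations as $\sigma^{2} \downarrow 0$ and identify the resulting limit as a Gaussian characteristic function, after which L\'evy's continuity theorem delivers the conclusion.

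First, I would note that since the entries of $\bm{B}$ are i.i.d., symmetric around zero, with variance $\sigma^{2}$ and finite fourth moment, Chebyshev's inequality gives $\bm{B} \xrightarrow{P} \bm{0}$ as $\sigma^{2}\downarrow 0$. Next, I would check that $\bm{b} \mapsto J^{-1}_{\psi^{*}_{\bm{\beta}}}(\Bbeta,\bm{b})$ and $\bm{b} \mapsto J^{-1}_{\psi^{*}_{\bm{b}}}(\Bbeta,\bm{b})$ are continuous at $\bm{b} = \bm{0}$. This reduces, through the block-matrix formulas of Lemma~\ref{appLem:Godambde}, to continuity of the sensitivity and variability matrices, which is itself a consequence of the regularity conditions (i)--(iv): the link function $g$ is twice continuously differentiable with bounded first derivative, and the unit deviance $d(y,\mu)$ is twice continuously differentiable in $\mu$, so the integrands defining $S_{\cdot\cdot}$ and $V_{\cdot\cdot}$ depend continuously on $\bm{b}$ through $\mu_{i}(\Bbeta,\bm{b}) = g^{-1}(\bm{x}_{i}^{T}\Bbeta + \bm{z}_{i}^{T}\bm{b})$.

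Combining these two facts, for every fixed $\bm{t}_{1}\in\R^{k}$ the map
\[
\bm{b} \longmapsto \exp\!\left(-\tfrac{1}{2}\bm{t}_{1}^{T}J^{-1}_{\psi^{*}_{\bm{\beta}}}(\Bbeta,\bm{b})\bm{t}_{1}\right)
\]
is continuous and bounded by $1$ (since $J^{-1}_{\psi^{*}_{\bm{\beta}}}$ is positive semi-definite). The continuous mapping theorem and bounded convergence then yield
\[
\E\!\left[\exp\!\left(-\tfrac{1}{2}\bm{t}_{1}^{T}J^{-1}_{\psi^{*}_{\bm{\beta}}}(\Bbeta,\bm{B})\bm{t}_{1}\right)\right]
\;\xrightarrow[\sigma^{2}\downarrow 0]{}\;
\exp\!\left(-\tfrac{1}{2}\bm{t}_{1}^{T}J^{-1}_{\psi^{*}_{\bm{\beta}}}(\Bbeta,\bm{0})\bm{t}_{1}\right),
\]
which is exactly the characteristic function of $N_{k}(\bm{0},J^{-1}_{\psi^{*}_{\bm{\beta}}}(\Bbeta,\bm{0}))$. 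L\'evy's continuity theorem then gives the first claim. The second claim, for $\sqrt{n}(\widehat{\bm{b}}_{n}-\Bb)$, follows by repeating the argument verbatim with $\bm{t}_{1}\in\R^{k}$ replaced by $\bm{t}_{2}\in\R^{q}$ and $J^{-1}_{\psi^{*}_{\bm{\beta}}}$ replaced by $J^{-1}_{\psi^{*}_{\bm{b}}}$.

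The main obstacle I anticipate is conceptual rather than technical: the theorem speaks of a joint limit in $(n,\sigma^{2})$, but Lemma~\ref{lemma:charFun} first lets $n\to\infty$ at fixed $\sigma^{2}$, and only then do I let $\sigma^{2}\downarrow 0$. Since the intermediate object is a characteristic function (bounded uniformly by $1$) and the convergence to $\exp(-\tfrac{1}{2}\bm{t}^{T}J^{-1}(\Bbeta,\bm{0})\bm{t})$ is pointwise in $\bm{t}$, a diagonal-subsequence argument combined with L\'evy's theorem legitimises the iterated-to-joint passage. The other delicate point is verifying continuity of $J^{-1}_{\psi^{*}_{\bm{\beta}}}$ and $J^{-1}_{\psi^{*}_{\bm{b}}}$ in $\bm{b}$ at the origin; although this reduces to routine checks using the regularity conditions, one must ensure that the matrix $W = D^{-1}$ appearing in Lemma~\ref{appLem:Godambde} remains invertible at $\bm{b}=\bm{0}$, which follows from the full-rank condition (i) on $\bm{X}$ and $\bm{Z}$.
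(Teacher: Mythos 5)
Your proposal is correct and rests on the same pivot as the paper: both proofs take Lemma~\ref{lemma:charFun} as the starting point, so that the only remaining work is to show that the mixture characteristic function $\E\bigl[\exp\bigl(-\tfrac{1}{2}\bm{t}^T J^{-1}_{\psi^*_{\bm{\beta}}}(\Bbeta,\bm{B})\bm{t}\bigr)\bigr]$ collapses to $\exp\bigl(-\tfrac{1}{2}\bm{t}^T J^{-1}_{\psi^*_{\bm{\beta}}}(\Bbeta,\bm{0})\bm{t}\bigr)$ as $\sigma^2\downarrow 0$. Where you differ is in how that last step is justified. The paper performs a first-order Taylor expansion of the integrand around $\bm{B}=\bm{0}$, observes that the linear term has expectation zero because $\E[\bm{B}]=\bm{0}$, and argues that the remainder $R(\bm{B})$ vanishes as $\sigma^2\to 0$; this requires some care because $R(\bm{B})$ is a random quantity and one must pass its expectation to zero, a point the paper treats rather loosely. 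You instead use $\bm{B}\xrightarrow{P}\bm{0}$ (Chebyshev), continuity of $\bm{b}\mapsto J^{-1}_{\psi^*_{\bm{\beta}}}(\Bbeta,\bm{b})$ at the origin via Lemma~\ref{appLem:Godambde} and the regularity conditions, and bounded convergence (the integrand is bounded by $1$ by positive semi-definiteness), followed by L\'evy's continuity theorem. Your route is the more robust one: it avoids any differentiability assumption on $\bm{b}\mapsto J^{-1}_{\psi^*_{\bm{\beta}}}(\Bbeta,\bm{b})$ and the delicate remainder estimate, needing only continuity. You also explicitly flag the iterated-versus-joint limit issue in $(n,\sigma^2)$, which the paper's statement asserts but its proof does not address; your acknowledgement that the argument as given establishes the iterated limit (first $n\to\infty$, then $\sigma^2\downarrow 0$) is an honest reading of what both proofs actually deliver.
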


\begin{proof}
	Consider the characteristic function of $\bm{Z}_{\Bbeta}$ found in Lemma~\ref{lemma:charFun}:
\begin{align}
	\E[\exp(i\bm{t}^T\bm{Z}_{\Bbeta})] 
	&=\E[	\exp(-\tfrac{1}{2}\bm{t}^T 
	J^{-1}_{\psi^*_{\bm{\beta}}}(\bm{\beta},\bm{B})\bm{t} )],
	\mbox{ for all } \bm{t}\in \R^k 
	.\label{eq:charac}
\end{align}
Using a first order Taylor approximation, we find that
\begin{align*}
	\exp(-\tfrac{1}{2}\bm{t}^T 
	J^{-1}_{\psi^*_{\bm{\beta}}}(\bm{\beta},\bm{B})\bm{t})
	&=\exp(-\tfrac{1}{2} \sum_{i=1}^k\sum_{j=1}^k  t_it_j 
	\{J^{-1}_{\psi^*_{\bm{\beta}}}(\bm{\beta},\bm{B})\}_{ij})\\
	&=\exp(-\tfrac{1}{2} \sum_{i=1}^k\sum_{j=1}^k  t_it_j 
	\{J^{-1}_{\psi^*_{\bm{\beta}}}(\bm{\beta},\bm{0})\}_{ij})+\\
	& \quad  	\exp(-\tfrac{1}{2} \sum_{i=1}^k\sum_{j=1}^k  t_it_j \{J^{-1}_{\psi^*_{\bm{\beta}}}(\bm{\beta},\bm{0})\}_{ij}) \times \\
	& \quad -\frac{1}{2} \bm{B}^T \sum_{i=1}^k\sum_{j=1}^k t_it_j 
	\frac{\partial 
	\{J^{-1}_{\psi^*_{\bm{\beta}}}\}_{ij}}{\partial 
	\bm{b}}(\bm{\beta},\bm{0})\\
	&\quad R(\bm{B}),
	\mbox{ for all } \bm{t}\in \R^k ,
\end{align*}
where $R(\cdot)$ is the remainder term which converges to zero when 
$\bm{B}$ 
converges to zero. Thus, for $\sigma^2$ converging to 
zero, $\bm{B}$ converges to the expectation which is zero. This imply, that the 
remainder term converges to zero. Notice, that the second term has expectation 
zero since $\E[\bm{B}]=0$, so inserting the above in \eqref{eq:charac} yields
\begin{align*}
	\E_{\bm{Z}_{\Bbeta}}[\exp(i\bm{t}^T\bm{Z}_{\Bbeta})] &=  
	\exp(-\tfrac{1}{2}\bm{t}^T 
	J^{-1}_{\psi^*_{\bm{\beta}}}(\bm{\beta},\bm{0})\bm{t} 
	)+R(\bm{B})
	\underset{\sigma^2 \to 0+}{\longrightarrow}  \exp(-\tfrac{1}{2}\bm{t}^T J^{-1}_{\psi^*_{\bm{\beta}}}(\bm{\beta},\bm{0})\bm{t} ).
\end{align*}
This proves that the asymptotically distribution of  $\{\hat{\bm{\beta}}_n 
\}_{n\in \mathbb{N}}$ converges to a Gaussian distribution when $\sigma^2$ 
converges to zero. The argument for $\{\hat{\bm{b}}_n\}_{n\in 
\mathbb{N}}$ 
is equivalent and follows by changing $\hat{\bm{\beta}}_n$ to 
$\hat{\bm{b}}_n$ 
and $\bm{Z}_{\Bbeta}$ to $\bm{Z}_{\Bb}$ (changing 
$J^{-1}_{\psi^*_{\bm{\beta}}}(\bm{\beta},\bm{B})$ to 
$J^{-1}_{\psi^*_{\bm{b}}}(\bm{\beta},\bm{B})$) in the above.
\end{proof}
\newpage

\subsection{Variance Estimation in for Models with Gaussian Random Components  \label{app:B}}

In this section, we calculate the integral in Equation~\eqref{eq:varInt} under the 
assumption that
\begin{align*}
	g(\widehat{\bm{b}};\bm{b},\bm{\Sigma}_{\widehat{\bm{b}}})&=
	\big \vert 2\pi \bm{\Sigma}_{\widehat{\bm{b}} } \big \vert^{-\tfrac{1}{2}}
	\exp\Big(-\tfrac{1}{2} \big(\widehat{\bm{b}} 
	-\bm{b}\big)^T\bm{\Sigma}_{\widehat{\bm{b}} 
	}^{-1}\big(\widehat{\bm{b}}-\bm{b}\big)\Big) \\
	\varphi({b};\sigma^2)&= 
	(2\pi\sigma^2)^{-\tfrac{1}{2}}\exp(-\tfrac{1}{2\sigma^2}b^2).
\end{align*}
Plugging into the integral yields 
\begin{align*}
		\int_{\R^q} &
	g(\hat{\bm{b}};\bm{b},\bm{\Sigma}_{\hat{\bm{b}}})
	\varphi(\bm{b};(\bm{I}_q-\tfrac{1}{q}\bm{E}_q)\sigma^2)d\bm{b}\\
	&=\int_{\R^q} 	\big \vert 2\pi \bm{\Sigma}_{\widehat{\bm{b}} } \big 
	\vert^{-\tfrac{1}{2}}
	\exp\Big(-\tfrac{1}{2} \big(\widehat{\bm{b}} 
	-\bm{b}\big)^T\bm{\Sigma}_{\widehat{\bm{b}} 
	}^{-1}\big(\widehat{\bm{b}}-\bm{b}\big)\Big) \left \vert 
	2\pi\sigma^2\bm{I}_q\right 
	\vert^{-\tfrac{1}{2}}\exp\big(-\tfrac{1}{2}\bm{b}^T\tfrac{1}{\sigma^2}\bm{I}_q\bm{b}\big)d\bm{b}\\
	&\quad =\big \vert 2\pi \bm{\Sigma}_{\widehat{\bm{b}}_k } \big 
	\vert^{-\tfrac{1}{2}}
	(2\pi\sigma^2)^{-\tfrac{q}{2}}
	\exp\left ( -\tfrac{1}{2}\widehat{\bm{b}}^T  
	\bm{\Sigma}_{\widehat{\bm{b}}}^{-1}  \widehat{\bm{b}} \right)
	\Big \vert 2\pi 
	\big[\bm{\Sigma}_{\widehat{\bm{b}}}^{-1}+\frac{1}{\sigma^2}\bm{I}_{q}\big]^{-1}\Big
	 \vert^{\tfrac{1}{2}}\\
	& \quad \quad \exp \Big(\tfrac{1}{2}\widehat{\bm{b}}^T 
	\bm{\Sigma}_{\widehat{\bm{b}}}^{-1}    
	\big[\bm{\Sigma}_{\widehat{\bm{b}}}^{-1}+\frac{1}{\sigma^2}\bm{I}_{q}\big]^{-1}
	 \bm{\Sigma}_{\widehat{\bm{b}}}^{-1} \widehat{\bm{b}}\Big).
\end{align*}

In the case of multiple random components, we maximise the integral above for 
each random component. If the random components are nested, we only predict 
values for the random components with the highest number of clusters and then uses 
least squares to predict values for each random component, see 
Section~\ref{SubSec:2.4}. Therefore, the calculations above are changed by 
replacing $\sigma^2\bm{I}_q$ with $\sum_{j=1}^K  \sigma_{\bm{B}_j }^2 
\bm{C}_{j} \bm{C}_{j}^T$, where $\bm{B}_1,\ldots,\bm{B}_K$ denotes the 
$K\in \mathbb{N}$ nested random components, and $\bm{C}_m$ the $q\times 
q_m$ dimensional matrix specifying for each level $l$ ($l\tth$ row) which 
entry of $\bm{B}_m$ that enters the $l$th entry of $\widehat{\bm{b}}$. Here 
$q_m$ 
is the dimension of the random vector $\bm{B}_m$.

In the multivariate model described in Section~\ref{SubSec:3.1}, the above 
integral can be adapted by letting 
$\widehat{\bm{b}}^T=(\widehat{\bm{b}}_{(1)}^T,\ldots,\widehat{\bm{b}}_{(d)}^T)$
 (and thus changing the dimension of $\bm{\Sigma}_{\widehat{\bm{b}}}$) and 
replacing $\sigma^2\bm{I}_q$  with $\bm{\Sigma}\otimes\bm{I}_q$.

\subsection{Multivariate Extension of the Laplace Approximation Method
	\label{app:C}}

We outline how the Laplace approximation in \cite{Breslow1993} can be 
extended to the multivariate model described in Section~\ref{SubSec:3.1}, when 
the random components follow a multivariate 
Gaussian distribution. This extension follows directly from \cite{Breslow1993} 
by redefining some matrices and vectors. We shortly describe how this was done 
in the simulation study in Section~\ref{SubSec:3.2}. The extension given below 
assumes that the marginal 
GLMMs are defined with exponential dispersion models (as in 
\cite{Breslow1993}) but this can easily be 
extended to include general dispersion models.

We assume that $\bm{B}_1,\ldots,\bm{B}_q$ are i.i.d 
according 
to a $d$-dimensional Gaussian distribution with zero mean and covariance 
matrix $\bm{\Sigma}$. Let $\bm{B}_{(j)}$ denote a vector containing all the 
$j\tth$ entries 
of $\bm{B}_1,\ldots,\bm{B}_q$ for 
$j=1,\ldots,d$.	
The above distributional assumptions implies that 
$\tilde{\bm{B}}^T=\big[(\bm{B}_{(1)})^T,\ldots,(\bm{B}_{(d)})^T\big]$ 
is Gaussian distributed 
with mean zero and covariance matrix $\bm{\Sigma} \,\otimes \, \bm{I}_q$, 
where 
$\bm{I}_q$ is the $q\times q$-dimensional identity matrix and $\otimes$ 
denotes the Kronecker product.

Recall that the $i\tth$ ($i=1,\ldots,n_j$) response in 
the $j\tth$ ($j=1,\ldots, d$) 
marginal model was denoted $y_i^{[j]}$.
Define for $j=1,\ldots,d$,  
the $n_j \times k_j$-dimensional matrix
$\bm{X}^{[j]}=[\bm{x}_{1j},\ldots,\bm{x}_{{n_j j}}]^T$, and likewise the
$n_j \times q$ matrix 
$\bm{Z}^{[j]}=[\bm{z}_{1j},\ldots,\bm{z}_{{n_j}j}]^T$. 
Based on these definitions, we define for $k=k_1+\ldots+k_d$ and 
$n=n_1+\ldots +n_d$, the
$n \times k$-dimensional matrix
$\bm{X}=\text{diag}[\bm{X}^{[1]},\ldots, \bm{X}^{[d]}]$ and the
$n \times dq$-dimensional matrix
$\bm{Z}=\text{diag}[\bm{Z}^{[1]},\ldots, \bm{Z}^{[d]}]$. Moreover, we
define for each dimension $j=1,\ldots,d$, the $n_j \times n_j$-dimensional
diagonal glm weight matrix $\bm{W}^{[j]}$ with diagonal entries
$w_{ii}^{[j]}=\tfrac{1}{2\lambda_j}\frac{2}{V_j(\mu_i^{[j]})g_j'(\mu_i^{[j]})^2}$,
and the $n\times n$ matrix 
$\bm{W}=\text{diag}[\bm{W}^{[1]},\ldots,\bm{W}^{[d]}]$.

By redefining the matrices $\bm{X}$, $\bm{Z}$, $\bm{W}$, 
$\bm{D}=\bm{\Sigma} \,\otimes \, \bm{I}_q$ and the vectors $\bm{\tilde{B}}$ 
and $\bm{y}^T=(y_1^{[1]},\ldots,y_{n_d}^{[d]})$, we can use the Laplace 
approximation in \cite{Breslow1993} to estimate the multivariate model.

\end{document}